\DeclareMathAlphabet{\mathpzc}{OT1}{pzc}{m}{it}
\def\BibTeX{{\rm B\kern-.05em{\sc i\kern-.025em b}\kern-.08em
    T\kern-.1667em\lower.7ex\hbox{E}\kern-.125emX}}
\begin{document}
\title{Nonlinear Discrete-time System Identification without Persistence of Excitation: Finite-time Concurrent Learning Methods}
\author{Farzaneh Tatari, Christos Panayiotou, and Marios Polycarpou 
\thanks{This work is funded by the European Union's Horizon 2020 research and innovation programme under grant agreement No. 739551 (KIOS CoE).}
\thanks{The authors are with the KIOS Research and Innovation Center of Excellence and the Department of Electrical and Computer Engineering, University of Cyprus, Nicosia,	Cyprus.     (e-mail: tatari.farzaneh@ ucy.ac.cy, christosp@ucy.ac.cy, mpolycar@ucy.ac.cy). }}

\maketitle
\begin{abstract}
This paper deals with the problem of finite-time learning for unknown discrete-time nonlinear systems' dynamics, without the requirement of the persistence of excitation. Two finite-time concurrent learning methods are presented to approximate the uncertainties of the discrete-time nonlinear systems in an online fashion by employing current data along with recorded experienced data satisfying an easy-to-check rank condition on the richness of the recorded data which is less restrictive in comparison with persistence of excitation condition. For the proposed finite-time concurrent learning methods, rigorous proofs guarantee the finite-time convergence of the estimated parameters to their optimal values based on the discrete-time Lyapunov analysis. Compared with the existing work in the literature, simulation results illustrate that the proposed methods can timely and precisely approximate the uncertainties. 
\end{abstract}

\begin{IEEEkeywords}
 Finite-time concurrent learning (FTCL), Nonlinear discrete-time systems, Unknown dynamics.
\end{IEEEkeywords}

\section{Introduction}
\label{sec:introduction}
Learning a high-fidelity model of a nonlinear system via stream of data is of vital importance in many engineering applications since such systems are highly subjected to uncertainties that can degrade the performance of the system controllers. It is well known that many learning strategies, such as least-square and gradient descent [1], depend heavily on the persistency of excitation (PE) condition that permanently requires a complete span of the space over which the learning is performed, and failure to fulfill this condition will lead to poor learning results. However, the PE condition might be hard to achieve or even might not be feasible in some scenarios, especially in the context of online learning. 

Concurrent learning [2]-[6] has emerged as a promising paradigm in the direction that guarantees the exponential convergence of the approximated parameters to their optimal values with relaxing the strict assumption of the PE condition to some easy-to-check verifiable conditions on the richness of data. Concurrent learning technique benefits from recorded experienced data along with current data to replace the PE condition on the regressor with a rank condition on the memory stack of the regressor recorded data. Based on this rank condition, the regressor matrix of the recorded data must contain the same number of linearly independent elements as the dimension of the independent basis functions in the regressor. 

In many practical situations, the system dynamics are employed in online monitoring and control applications; therefore, learning the system’s unknown dynamics over a finite-time interval is required. Finite-time learning is of more interest rather than learning with asymptotic or exponential convergence rate since it is more physically realizable than concerning infinite time. Moreover, such finite-time learning scheme is of utmost importance for learning-based controlled systems that demand fast and reliable actions. The knowledge of a finite time for parameter estimation error convergence in systems' control improves the performance while avoiding conservationism due to slow or asymptotic convergence. 

Although, finite-time control methods have been extensively employed for discrete and continuous-time systems [7]-[11], fewer attempts have been proposed by several researchers to tackle finite-time learning such as [12]-[17] where the majority of them are concerning with continuous-time systems. 
Using concurrent learning, the authors in [15] proposed a finite-time learning; however, the results and analysis are limited to continuous-time systems.

It is favorable to employ finite-time learning schemes that can alleviate the restrictive PE condition for discrete-time systems' identification. Discrete-time systems are quite different with continuous-time systems; therefore, the tools applied to the continuous-time domain cannot be directly employed for the discrete-time domain.  Moreover, different from finite-time stability analysis of continuous-time systems, which can draw support from many mathematical tools, the mathematical tools for finite-time stability analysis of discrete-time systems are not plenty. Therefore, the research on finite-time convergent concurrent learning identification method for discrete-time systems is more challenging and complex. 
 
 In many practical applications, however, identifying precise discrete-time dynamics of the system is required due to the development of computer technology and the introduction of digital controllers and sensors. Hence, it is of great practical importance to investigate finite-time learning of discrete-time systems. There are only a few results discussing the identification of discrete-time systems including [16]-[19]. The work of [16], studied the discrete-time systems' uncertainty identification in finite-time, where the proposed batch learning method required the online invertibility check of a regressor matrix and its inverse computation, along with interval excitation of the regressor. However, the required regressor matrix inversion makes the method in [16] inefficient in online learning for the case of large number of unknown parameters' identification. The authors in [17] presented some results on parameter estimation via dynamic regressor extension and mixing for both continuous-time and discrete-time systems, however, their results on finite-time convergence are limited to continuous-time scalar systems. Moreover, none of the approaches given in [16] and [17] have computed the upper bound of the settling-time function for convergence. 
 In [18], a concurrent learning-based method for discrete-time function approximation is presented that relaxed the PE condition and guaranteed the asymptotic convergence of the estimated parameters. Our earlier work [19] studied how concurrent learning can be employed for finite-time identification of discrete-time systems' dynamics. However, [19] only investigated the special case of adaptive approximators with zero minimum functional approximation error (MFAE) where MFAE is the residual approximation error in the case of optimal parameters. 
 
 Motivated by the above-mentioned discussions, this work aims to propose online finite-time concurrent learning (FTCL) schemes for discrete-time systems that guarantee finite-time parameter convergence without the restrictive PE condition by employing a memory stack of data, satisfying a rank condition. In this paper, in contrast to [19], every proposed FTCL method includes both cases of identification where the optimal set of unknown parameters can make the identification error either non-zero or zero. For the systems with mismatch identification error, the MFAE is non-zero. Moreover, opposed to [16], the proposed adaptive FTCL methods do not require the regressor matrix inversion and they represent finite upper bounds for the settling-time functions. In order to approximate the unknown discrete-time system functions, linearly parameterized universal approximators such as radial basis function neural networks are used. It is shown that under a verifiable rank condition and along with a learning rate condition, the proposed FTCL methods guarantee the finite-time convergence of the parameters' estimation errors. 

This paper
contains the following contributions:
 	\begin {enumerate}
	\item Two novel FTCL schemes are presented for learning the unknown dynamics of nonlinear discrete-time systems.
    \item For the two FTCL methods, rigorous proofs ensure the finite-time convergence of the parameters estimation errors to the origin for adaptive approximators with zero MFAE using discrete-time Lyapunov analysis. It is also guaranteed that for adaptive approximators with non-zero MFAE, the parameter estimation errors are finite-time attractive.
    \item The finite upper bounds for the settling-time functions of the proposed FTCL methods are given. In addition, based on the finite-time analysis, for both presented FTCL methods, conditions on the learning rates are derived for finite-time convergence.	
	\end {enumerate}
	\paragraph*{Notation}
  $\Real$, $\mathbb{Z}$, and $\mathbb{N}^+$ respectively show the set of real, integer and natural numbers without zero. $\Arrowvert . \Arrowvert$ denotes the Euclidean norm for vectors and induced 2-norm for matrices.  Trace of a matrix is indicated with $tr(.)$. The minimum and maximum eigenvalues of matrix $A$ are respectively denoted by $\lambda_{min} (A)$ and $\lambda_{max}(A)$. The matrix $I$ is the identity matrix of appropriate dimensions. $\lfloor . \rfloor: \Real \mapsto \mathbb{Z} $  is the floor function. 

\section{Problem Formulation and Preliminaries}
	\subsection{Preliminaries}
	
	The following definitions, facts and lemmas are needed through the paper. 
			\begin{definition}{[20] }
		The bounded signal $d(k)$ is said to be persistently exciting if there exist positive scalars $\mu_{1}$, $\mu_{2}$ and $T\in \mathbb{N}^+$ such that $ \forall \tau \in \mathbb{N}^+$,
		$\mu_{1} I \le \sum_{k=\tau}^{\tau+T} d(k) d^T(k) \le \mu_{2}I $.
	\end{definition}
\begin{definition} {[21]}
	Consider the system 
		\begin{align}
		y(k+1)= F(y(k)),\,\,\, y(0)=y_0, \label{sys2}
		\end{align} where $y\in \mathcal{D}_y \subset \mathbb{R}^n$, $F:\mathcal{D}_y \mapsto \Real^n$ is a nonlinear function on the neighborhood $\mathcal{D}_y$ of the origin and origin is the equilibrium point of \eqref{sys2}. The system \eqref{sys2} is said to be  
		\begin {enumerate}
		\item finite-time stable, if it is Lyapunov stable and finite-time convergent where $\forall y_0 \in \mathcal{D}_y$ any solution $y(k)$ of \eqref{sys2} reaches the origin at some finite time moment, i.e., $y(k)=0$, $\forall k > K(y_0)$ where $K:\mathcal{D}_y \backslash \{0\} \mapsto \mathbb{N}^+ $ is a settling-time function.
		\item finite-time attractive to an ultimate bounded set $Y$ around zero, if solution $y(k)$ of \eqref{sys2} reaches $Y$ in finite-time $k>K(y_0)$ and remains there  $\forall k> K(y_0)$, where $K:{\mathcal {D}}_y\backslash \{0\}  \mapsto \mathbb{N}^+$ is a settling-time function.
		\end{enumerate}
	\end{definition}
		\begin{fact}
		For every matrix $A$ and $B$ of the same dimensions, it is known that $\|A\|-\|B\| \le \|A-B\| \le \|A\|+\|B\|$.	
	\end{fact}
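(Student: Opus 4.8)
The plan is to obtain both inequalities directly from the two norm axioms that the induced $2$-norm satisfies on the space of matrices of a given size, namely subadditivity $\|X+Y\|\le\|X\|+\|Y\|$ and absolute homogeneity $\|\alpha X\|=|\alpha|\,\|X\|$ (in particular $\|-B\|=\|B\|$). Both of these follow at once from the definition $\|X\|=\sup_{\|v\|=1}\|Xv\|$ together with the corresponding properties of the Euclidean vector norm, so they may be taken as given; nothing about matrices beyond ``$\|\cdot\|$ is a norm'' enters, which is incidentally why the identical argument also covers the vector-norm instance in which the Fact is later invoked.

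For the right-hand inequality I would write $A-B=A+(-B)$ and combine the two axioms:
\[
\|A-B\|\le\|A\|+\|-B\|=\|A\|+\|B\|.
\]
For the left-hand inequality I would instead decompose $A=(A-B)+B$, apply subadditivity to get $\|A\|\le\|A-B\|+\|B\|$, and rearrange to $\|A\|-\|B\|\le\|A-B\|$. Exchanging the roles of $A$ and $B$ and using $\|B-A\|=\|A-B\|$ additionally gives $\|B\|-\|A\|\le\|A-B\|$, hence the sharper two-sided bound $\bigl|\,\|A\|-\|B\|\,\bigr|\le\|A-B\|$; only the one-sided version written in the statement is needed in the sequel.

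There is essentially no obstacle here: this is just the (reverse) triangle inequality, and the single point worth checking — that the induced $2$-norm really is a norm on matrices of fixed dimensions — is classical. Accordingly I would simply record the two short derivations above and move on.
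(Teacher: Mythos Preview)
Your argument is correct and is exactly the standard derivation of the (reverse) triangle inequality from the norm axioms. The paper itself offers no proof of this Fact---it is simply asserted as well known---so your write-up already goes beyond what the paper provides.
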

	
	\begin{fact}
	For a vector $x= [x_1,x_2,...,x_n]^T\in \Real^n$, the $p$-norm is defined as $\|x\|_p = (\sum_{i=1}^{n}|x_i|^p)^{\frac{1}{p}}$ and for positive constants $r$ and $s$, if $0<r<s$, using H\"{o}lder inequality [22], one has $\|x\|_s \le \|x\|_r \le n^{\frac{1}{r} - \frac{1}{s}} \|x\|_s$.
		\end{fact}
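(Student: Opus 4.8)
The plan is to prove the two inequalities separately, each reducing to an elementary estimate on the nonnegative scalars $|x_1|,\dots,|x_n|$; the degenerate case $x=0$ is trivial since both bounds then hold with equality, so assume $x\neq 0$ throughout.

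For the left-hand bound $\|x\|_s \le \|x\|_r$ with $0<r<s$, I would substitute $t := s/r>1$ and $a_i := |x_i|^r \ge 0$, so that the claim becomes $\bigl(\sum_{i=1}^n a_i^{\,t}\bigr)^{1/t}\le \sum_{i=1}^n a_i$. Normalizing by $S:=\sum_i a_i>0$ and writing $b_i:=a_i/S\in[0,1]$, one has $b_i^{\,t}\le b_i$ because $t\ge1$ and $0\le b_i\le 1$; summing gives $\sum_i b_i^{\,t}\le \sum_i b_i=1$, i.e. $\sum_i a_i^{\,t}\le S^{\,t}$, and taking $t$-th roots and undoing the substitution yields $\|x\|_s\le\|x\|_r$.

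For the right-hand bound, I would apply H\"older's inequality to $\sum_i |x_i|^r\cdot 1$ with conjugate exponents $p=s/r$ and $q=s/(s-r)$, which are admissible since $0<r<s$ forces $p>1$, $q>1$ and $1/p+1/q = r/s+(s-r)/s=1$. This gives $\sum_i|x_i|^r \le \bigl(\sum_i |x_i|^{s}\bigr)^{r/s}\bigl(\sum_i 1\bigr)^{(s-r)/s} = \|x\|_s^{\,r}\,n^{(s-r)/s}$, and raising to the power $1/r$ produces $\|x\|_r \le n^{(s-r)/(sr)}\|x\|_s = n^{1/r-1/s}\|x\|_s$, using $(s-r)/(sr)=1/r-1/s$.

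I do not expect a genuine obstacle here: the content is standard norm-equivalence on $\Real^n$. The only points needing care are verifying admissibility of the H\"older exponents under the hypothesis $0<r<s$, simplifying the exponent of $n$ correctly, and (optionally) noting that both inequalities are tight — equality on the left when $x$ is a coordinate vector, and on the right when all $|x_i|$ are equal — which confirms the constants cannot be improved.
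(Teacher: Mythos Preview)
Your argument is correct and is the standard route to these norm-comparison inequalities. Note, however, that the paper does not actually supply a proof of this Fact: it is stated as a known result with a citation to Mitrinovi\'c's \emph{Analytic Inequalities} [22], so there is no ``paper's own proof'' to compare against. Your write-up is a clean self-contained justification, with the H\"older exponents checked and the exponent of $n$ simplified correctly; the optional remark on sharpness is a nice addition but not needed for the paper's purposes.
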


	\begin{lemma} {[19]}
		Consider the system \eqref{sys2}. Suppose there is a continuous positive definite Lyapunov function $V:\Omega \mapsto \Real$ where $\Omega$ is an open neighborhood of the origin and $\Delta V(y(k))= V(y(k))-V(y(k-1))$. If there exist positive constants $0 < a <1$, $b>0$, $0<\mu<1 $,  and a neighborhood $\mathcal{M}\subset \Omega$ such that for $\forall y \in \mathcal{M} \backslash\{0\}$,
		\begin{align}
		\Delta V(y(k))\le-aV(y(k-1)) -b (V(y(k-1)))^ \mu,  \label{DelVlem1}
		\end{align}
		 then, the system \eqref{sys2} is finite-time stable and there is a neighborhood $\mathcal{N}$ and a settling-time function $K(y_0)$ such that for $V(y_0)>{(\frac{b}{1-a})}^{\frac{1}{1-\mu}}$,
		\begin{align}
		K(y_0) \le \lfloor \frac{V(y_0)}{a(\frac{b}{1-a})^{\frac{1}{1-\mu}}+ b{ (\frac{b}{1-a})^{\frac{\mu}{1-\mu}}}} \rfloor +1,\,\,\,y_0 \in \mathcal{N},
		\end{align}
		and $K(y_0)=1$ for $V(y_0) \le {(\frac{b}{1-a})}^{\frac{1}{1-\mu}}$, $y_0 \in \mathcal{N}$.
	\end{lemma}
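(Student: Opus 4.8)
The plan is to reduce the vector dynamics $y(k{+}1)=F(y(k))$ to a scalar recursion for $V_k:=V(y(k))$ and to show that $V_k$ is driven below the critical level $c:=(b/(1-a))^{1/(1-\mu)}$ within $\lfloor V(y_0)/c\rfloor$ steps, after which one further step forces $V_k=0$. First I would rewrite \eqref{DelVlem1} as
\begin{align}
V_k \le (1-a)V_{k-1}-bV_{k-1}^{\mu},\qquad y(k-1)\in\mathcal M\setminus\{0\},
\end{align}
and record the algebraic identity that underlies $c$: since $c^{1-\mu}=b/(1-a)$ one has $bc^{\mu}=(1-a)c$, so the denominator $a(b/(1-a))^{1/(1-\mu)}+b(b/(1-a))^{\mu/(1-\mu)}=ac+bc^{\mu}$ appearing in the claimed bound equals exactly $c$. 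This identity is what makes the index count come out cleanly.

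Next I would prove two elementary one-step facts. (i) If $0<V_{k-1}\le c$, then writing $(1-a)V_{k-1}-bV_{k-1}^{\mu}=V_{k-1}^{\mu}\bigl((1-a)V_{k-1}^{1-\mu}-b\bigr)$ and using $V_{k-1}^{1-\mu}\le c^{1-\mu}=b/(1-a)$ shows the right-hand side of the recursion is $\le 0$; since $V$ is positive definite this forces $V_k=0$, i.e.\ $y(k)=0$, and then $y(j)=0$ for all $j\ge k$ because $F(0)=0$. In particular, if $V(y_0)\le c$ the origin is reached at time $1$, giving $K(y_0)=1$. (ii) If $V_{k-1}\ge c$, then by monotonicity of $t\mapsto at$ and $t\mapsto bt^{\mu}$ on $[0,\infty)$ we get $aV_{k-1}+bV_{k-1}^{\mu}\ge ac+bc^{\mu}=c$, hence $V_k\le V_{k-1}-c$; that is, while $V$ has not yet dropped below $c$ it decreases by at least $c$ per step.

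The settling-time estimate then follows by iterating (ii). Starting from $V_0:=V(y_0)>c$, put $N:=\lfloor V_0/c\rfloor\ge 1$. Either some $V_j$ with $j\le N-1$ satisfies $V_j<c$, and then (i) gives $y(j+1)=0$ with $j+1\le N+1$; or $V_j\ge c$ for $j=0,\dots,N-1$, in which case $N$ applications of (ii) yield $V_N\le V_0-Nc<c$ (using $Nc>V_0-c$) and (i) again gives $y(N+1)=0$. In both cases $K(y_0)\le N+1=\lfloor V_0/c\rfloor+1$, which is precisely the stated bound after substituting $c$ and $c=ac+bc^{\mu}$. This establishes finite-time convergence; Lyapunov stability is the standard argument, using that \eqref{DelVlem1} also yields $V_k\le(1-a)V_{k-1}\le V_{k-1}$, so $V$ is non-increasing along solutions: given $\varepsilon>0$ small enough that the sphere of radius $\varepsilon$ lies in $\mathcal M$, choose $\eta>0$ with $\|y_0\|<\eta\Rightarrow V(y_0)<\min_{\|y\|=\varepsilon}V(y)$, and non-increase keeps $\|y(k)\|<\varepsilon$ for all $k$.

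The point that needs care is that \eqref{DelVlem1} is assumed only on $\mathcal M\setminus\{0\}$, so the trajectory must be confined to $\mathcal M$ before the recursion can be applied at each step. Since $V$ is continuous and positive definite, I would pick $r>0$ with the sublevel set $\{y:V(y)\le r\}\subset\mathcal M$ and take $\mathcal N=\{y:V(y)<r\}$; then non-increase of $V$ along solutions keeps $y(k)\in\{V\le r\}\subset\mathcal M$ for every $k$ (until it hits $0$), so each step of the argument is legitimate. The main obstacle I anticipate is exactly this confinement-plus-bookkeeping — verifying invariance of the sublevel set and handling the floor function and the $V_{k-1}=c$ boundary case so that the constant is $\lfloor V(y_0)/c\rfloor+1$ rather than off by one — with the identity $ac+bc^{\mu}=c$ being the key that aligns the step count.
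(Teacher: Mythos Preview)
The paper does not supply its own proof of this lemma; it is quoted from reference~[19], so there is no argument in the present manuscript to compare against. Your proof is correct and is the natural one. The crucial observation you make---that with $c:=(b/(1-a))^{1/(1-\mu)}$ one has $(1-a)c=bc^{\mu}$ and hence $ac+bc^{\mu}=c$, so the denominator in the claimed bound collapses to $c$---is exactly what makes the step count come out as $\lfloor V(y_0)/c\rfloor+1$. Your two one-step facts (from $0<V_{k-1}\le c$ the recursion forces $V_k\le 0$ hence $V_k=0$; from $V_{k-1}\ge c$ the drop $V_{k-1}-V_k\ge aV_{k-1}+bV_{k-1}^{\mu}\ge ac+bc^{\mu}=c$) then give the bound by a straightforward induction, and the $K(y_0)=1$ case is immediate. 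The only point that is genuinely delicate, and which you rightly flag, is the discrete-time forward invariance of the chosen sublevel set: strictly speaking one also needs continuity of $F$ at the origin (or $V$ globally defined) to rule out a single step jumping outside $\Omega$ before $V(y(k))$ is even evaluated. In the paper's applications $V(\tilde\Theta)=\mathrm{tr}(\tilde\Theta^T\Gamma^{-1}\tilde\Theta)$ is a global quadratic form, so this technicality never bites.
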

	
	\begin{lemma} {[23]}
	Consider the system \eqref{sys2} and assume that there exist a continuous positive definite function $V:\Omega \mapsto \Real$ where $\Omega$ is an open neighborhood of the origin and $\Delta V(y(k))= V(y(k))-V(y(k-1))$. If real numbers $\alpha \in (0,1)$, $c>0$ and a a neighborhood $\mathcal{M}\subset \Omega$ exist such that $\forall y \in \mathcal{M} \backslash\{0\}$,
\begin{align}
\Delta V(y(k)) \le -c \min \{\frac{V(y(k-1))}{c}, V^{\alpha}(y(k-1))\},
\end{align} 
	    then the system \eqref{sys2} is finite-time stable and there exist a settling-time function $K(y_0)$ and a neighborhood $\mathcal{N}$ such that for $V(y_0)> c^{\frac{1}{1-\alpha}}$,
	    \begin{align}
	   K(y_0) \le \lfloor log_{[1-cV(y_0)^{\alpha-1}]} \frac{c^{\frac{1}{1-\alpha}}}{V(y_0)} \rfloor + 1,\,\,\, y_0 \in \mathcal{N},
	    \end{align}
	    and $K(y_0)=1$ for $V(y_0)\le c^{\frac{1}{1-\alpha}}$, $y_0 \in \mathcal{N}$.
 	\end{lemma}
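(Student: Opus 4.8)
The plan is to turn the hypothesis into an explicit scalar recursion for $V_k:=V(y(k))$, run a geometric-decay estimate down to a small sublevel set, and then use one extra step to land exactly on the origin. First I would set $L:=c^{1/(1-\alpha)}$ and observe that, since $0<\alpha<1$, one has $V_{k-1}/c\le V_{k-1}^{\alpha}$ iff $V_{k-1}\le L$; hence the assumed inequality reads $V_k-V_{k-1}\le -V_{k-1}$ whenever $V_{k-1}\le L$, and $V_k-V_{k-1}\le -c\,V_{k-1}^{\alpha}$ whenever $V_{k-1}>L$. In either case $\Delta V(y(k))\le 0$, so $V$ is non-increasing along any trajectory that stays in $\mathcal{M}$; combined with positive definiteness of $V$ this gives Lyapunov stability and, by a standard sublevel-set argument, a positively invariant neighborhood $\mathcal{N}\subseteq\mathcal{M}$ of the origin on which the hypothesis applies at every step. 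The easy case is then immediate: if $V_0\le L$, the first branch gives $V_1\le V_0-V_0=0$, so $V_1=0$ and $y(1)=0$; since the origin is an equilibrium, $y(k)=0$ for all $k>1$, so $K(y_0)=1$ is admissible.

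The substantive case is $V_0>L$. While $V_{k-1}>L$ I would write $V_k\le V_{k-1}-c\,V_{k-1}^{\alpha}=V_{k-1}(1-c\,V_{k-1}^{\alpha-1})$ and then, using monotonicity of $V$ (so $V_{k-1}\le V_0$) together with $\alpha-1<0$, bound the decay factor by $1-c\,V_{k-1}^{\alpha-1}\le q$ with $q:=1-c\,V_0^{\alpha-1}$. Here the assumption $V_0>L$ is exactly what makes $c\,V_0^{\alpha-1}<1$, i.e. $q\in(0,1)$, so $q$ is a legitimate base for the logarithm in the statement and $L/V_0\in(0,1)$. Thus $V_k\le q\,V_{k-1}$ whenever $V_{k-1}>L$, and a short induction gives $V_k\le q^{k}V_0$ as long as $V_0,\dots,V_{k-1}$ all exceed $L$. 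Since $q^{k}V_0\to 0$ as $k\to\infty$ while $L>0$, there is a smallest index $k^*\ge 1$ with $V_{k^*}\le L$. For every $k\le k^*-1$ we have $L<V_k\le q^{k}V_0$, i.e. $q^{k}>L/V_0$; taking logarithms and dividing by $\ln q<0$ (which reverses the inequality) yields $k<\log_q(L/V_0)$, hence $k^*-1<\log_q(L/V_0)$, and since $k^*-1$ is an integer, $k^*\le\lfloor\log_q(L/V_0)\rfloor+1$. Finally, at step $k^*$ we are back in the first branch, so $V_{k^*+1}\le V_{k^*}-V_{k^*}=0$, hence $y(k^*+1)=0$ and $y(k)=0$ for all $k>k^*$. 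Therefore $K(y_0):=k^*$ is a settling time with $K(y_0)\le\lfloor\log_{[1-cV(y_0)^{\alpha-1}]}\frac{c^{\frac{1}{1-\alpha}}}{V(y_0)}\rfloor+1$, which together with Lyapunov stability establishes finite-time stability.

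The step I expect to be the main obstacle is arranging the comparison so that the logarithm base comes out as $1-cV(y_0)^{\alpha-1}$ rather than something cruder. The raw recursion $V_k\le V_{k-1}-c\,V_{k-1}^{\alpha}$ carries a state-dependent decay factor, and the key move is to use monotonicity of $V$ to replace that factor, once and for all, by the constant $q=1-cV(y_0)^{\alpha-1}$; this simultaneously yields the geometric decay $V_k\le q^{k}V_0$ and pins down the correct base. The remainder is bookkeeping that still has to be done carefully: checking that $V_0>L$ is precisely the condition making $q\in(0,1)$, converting the strict inequality $k^*-1<\log_q(L/V_0)$ into the floor bound, and accounting for the single extra step needed to descend from $\{V\le L\}$ down to the origin, which is what produces the ``$+1$'' in the stated settling-time bound.
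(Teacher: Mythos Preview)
The paper does not contain its own proof of this lemma: it is quoted from reference [23] (Haddad and Lee, \emph{Automatica}, 2020) as a preliminary result and no argument is supplied in the text, so there is no paper-side proof to compare your proposal against. Your argument is essentially the standard one from that source: split on the threshold $L=c^{1/(1-\alpha)}$, use monotonicity of $V$ along trajectories to replace the state-dependent contraction factor $1-cV_{k-1}^{\alpha-1}$ by the constant $q=1-cV(y_0)^{\alpha-1}$, run the resulting geometric decay until hitting $\{V\le L\}$, and then exploit the branch $\Delta V\le -V$ to reach the origin in one further step; the floor-plus-one bookkeeping is handled correctly, including the observation that $V_0>L$ is exactly what makes $q\in(0,1)$. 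I see no substantive gap in the reasoning.
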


	\subsection{Problem Formulation}
	Consider a discrete-time nonlinear system as follows,
	\begin{align} 
	{x}(k+1) =& {f}(x(k))+{g}(x(k)){u}(k) ,\,\,\,\,\, x(0)=x_0,\label{sys}
	\end{align} where ${x} \in \mathcal{D}_x \subset {\Real}^n$ is the measurable state vector and ${u} \in \mathcal{D}_u \subset {\Real}^m$ is the control input vector, $\mathcal{D}_x$ and $\mathcal{D}_u$ are compact sets; ${f}: \mathcal{D}_x \mapsto {\Real}^n$, and ${g}: \mathcal{D}_x \mapsto {\Real}^{n \times m}$ are respectively the unknown nonlinear drift and input terms. 
	This paper aims to learn the system unknown dynamics in \eqref{sys}, namely to approximate the uncertain functions $f(x)$ and $g(x)$ in a finite time using concurrent learning techniques. 
	
	In order to learn $f(x)$ and $g(x)$ in the system dynamics, using linearly parameterized adaptive approximators, one has
	\begin{align} 
	{{f}}(x)= \Theta_f^{*T} {{\varphi}} (x) + e_{f}(x), \,\,\,\,
	{g}(x) = { \Theta_g^{*T}}  {\chi} (x) + e_{g}(x),\label{truefg}
	\end{align}
	where the matrices ${\Theta_f^*} \in \mathcal{D}_{f} \subset {\Real}^{ p \times n}$ and ${\Theta}_g^* \in \mathcal{D}_{g} \subset {\Real}^{q \times n}$ denote the unknown optimal parameters of the adaptive approximation models and ${{\varphi}}: \mathcal{D}_x\mapsto {\Real}^{p}$ and $ {\chi}: \mathcal{D}_x\mapsto {\Real}^{q}$ are the vectors denoting the basis functions, whereas $ p$ and $q$ are, respectively, the number of linearly independent basis functions to approximate ${f}(x)$ and ${g}(x)$. The quantities $e_{f}(x)\in \Real^n$ and $e_{g}(x)\in \Real^{n\times m}$ are, respectively, the MFAEs for ${f}(x)$ and ${g}(x)$, denoting the residual approximation error in the case of optimal parameters. If the unknown functions ${f}(x)$ and ${g}(x)$ are approximated exactly by the adaptive approximators $\Theta_f^{T} {{\varphi}} (x)$ and ${ \Theta_g^{T}}  {\chi} (x)$, respectively, MFAE is zero, i.e., $e_f(x)=e_g(x)=0$.
	
	Using \eqref{truefg}, \eqref{sys} is rewritten as follows
	\begin{align} 
	{{x}(k+1)}  = \Theta^{*T}  z(x(k), u(k))+ \varepsilon (x(k), u(k)),\label{regsys}
	\end{align} 
	where $\Theta^*=[\Theta_f^{*T},\Theta_g^{*T}]^T \in \Real ^{(p+q) \times n}$, $z(x,u)=[ \varphi^T(x),u^T  \chi^T(x)]^T\in \Real ^{(p+q)}$, and $\varepsilon(x,u)= e_{f}(x) + e_{g}(x)u$. 

\begin{assumption}
		In the compact set $\mathcal{D}_x$, the approximators' basis functions are bounded, and the approximation error $\varepsilon(x(k),u(k))$ is upper bounded for admissible controls $u(k)$ by a bound $b_\varepsilon \ge 0$ (i.e., $\sup \limits_{x\in \mathcal{D}_x, u\in \mathcal{D}_u} \|\varepsilon(x(k),u(k))\| \le b_{\varepsilon}$).
	\end{assumption}
\begin{remark}
		In the literature, Assumption 1 is standard based on universal approximator characteristics [20].
\end{remark}

Since $x(k+1)$ is not available, regressor filtering [1], [3], [19] is used which  gives the state space solution of \eqref{regsys} as follows 
	\begin{align} 
	&x(k)=  \Theta^{*T} d(k) - l(k) + {C^k} x_0 + \varepsilon_{f}(k), \label{filtx} \\
	&d(k+1) = c d(k) + z(x(k), u(k)), \,\, d(0)=0, \label{regdl}\nonumber \\
	&l(k+1)=Cl(k) + Cx(k), \,\, l(0)=0,
	\end{align}
	where $C=cI, -1<c<1$, $l(k) = \sum_{h=0}^{k-1} C^{k-h} x(h) $ is the filtered regressor of $ x(k)$, $d(k) = \sum_{h=0}^{k-1}  c^{k-h-1} z(x(h), u(h)) $ is the filtered regressor of $z(x(k), u(k))$,
	$\varepsilon _{f}(k) = \sum_{h=0}^{k-1}  C^{k-h-1} \varepsilon (x(h), u(h))$.

Dividing \eqref{filtx} to the signal $n_s := 1 + d^T(k)d(k)  + l^T(k)l(k) $, normalizes \eqref{filtx} as follows
 \begin{align} 
 \bar x(k)= & \Theta^{*T} \bar d(k) - \bar l(k)  + C^k \bar x_0 + \bar \varepsilon(k),   
 \end{align}
where $\bar d = \frac{d}{n_s}$, $\bar l= \frac{l}{n_s}$, $\bar x= \frac{x}{n_s}$ and $ \bar \varepsilon= \frac{\varepsilon_{f}}{n_s}$. Based on Assumption 1, $\bar \varepsilon(k)$ is also upper bounded by a bound $b_{\bar \varepsilon} \ge 0$, i.e., $\|\bar \varepsilon(k)\| \le b_{\bar \varepsilon}$, and  $\|\bar d(k)\|<1$. 
	
	Now, let the approximator be of the form 
	\begin{align} 
	\hat{\bar x}(k)=& \hat\Theta^T(k) \bar d(k) - \bar l(k) + C^k\bar x_0, \label{app}
	\end{align} 
	where $\hat\Theta(k)=[{\hat{\Theta}}_f^T(k),{\hat{ \Theta}}_g^T(k)]^T \in \Real ^{( p+q)\times n}$, ${\hat{\Theta}}_f(k)$ and ${\hat{\Theta}}_g(k)$ are the estimation for parameters matrices  $\Theta^*$, $\Theta_f^*$ and $\Theta_g^*$ at time $k$, respectively. Define the state estimation error as
	\begin{align} 
	e(k) = \hat{ \bar { x}}(k) - \bar {x}(k) = \tilde \Theta^T(k) \bar d(k) - \bar \varepsilon(k),\label{errork}
	\end{align}
	where $\tilde \Theta (k) := \hat \Theta (k) - \Theta^*  := [{\tilde\Theta}_f^T(k),{\tilde\Theta}_g^T(k)]^T$ is the parameter estimation error with ${\tilde\Theta}_f(k) := {\hat\Theta}_f(k) - \Theta_f^*$, ${\tilde\Theta}_g(k) := {\hat\Theta}_g(k) -\Theta_g^*$.
	
	To fulfill the finite-time learning of the uncertainties $f(x)$ and $g(x)$ in the system \eqref{sys}, the paper objective is to propose finite-time concurrent learning-based estimation methods that relax the PE condition and satisfy the following criteria:
		\begin {enumerate}
	\item For adaptive approximators with zero MFAE, the parameters' estimation error $\tilde \Theta (k) $ converges to zero in finite time.
		\item For adaptive approximators with non-zero MFAE, the parameters' estimation error $\tilde \Theta (k) $ is finite-time attractive to a bounded set around zero.
		\end {enumerate}

\section{Finite-time Concurrent Learning for Unknown Discrete-time Systems}
	
	 In order to use concurrent learning, that employs recorded experienced data concurrently with current data in the parameter estimation law, the past data is recorded and stored in the memory stacks $M \in \Real^{(p+q) \times P}$, $L \in \Real^{n \times P}$ and $X \in \Real^{n \times P}$, at time steps $\tau_1,...,\tau_{P}$ as given below,
\begin{align} 
	&M=[\bar d(\tau_1),\bar d(\tau_2),..., \bar d(\tau_{P})],\,\,\,L=[\bar l(\tau_1),\bar l(\tau_2),..., \bar l(\tau_{P})],\nonumber \\& X=[\bar x(\tau_1),\bar x(\tau_2),..., \bar x(\tau_{P})], \label{stacks}
	\end{align}
 where $P$ denotes the number of data points stored in every stack. Note that $P$ is chosen such that $M$ contains at least as many linearly independent elements as the dimension of $d(k)$ (i.e., the total number of linearly independent basis functions for approximating $f(x)$ and $g(x)$), given in \eqref{filtx}, that is called as $M$ rank condition and requires $P \ge p+q$. Consider the error $e_h(k)$ for the $h^{th}$ recorded data as
\begin{align} 
	e_h(k) = \hat {\bar {x}}_h(k) - \bar { x}(\tau_h) , \label{errorkh}
	\end{align}
 where 
	\begin{align} 
	\hat {\bar { x}}_h(k) =& \hat\Theta^T(k) \bar {d}(\tau_h)- \bar l(\tau_h) + C^{k}  \bar x_0, \label{hatxkh}
	\end{align}
	is the state estimation at $0\le \tau_h < k$ time step, $h=1,...,P$, employing the current estimated parameters matrix $\hat\Theta(k)$ and recorded $\bar {d}(\tau_h)$ and $\bar l(\tau_h)$. Replacing $\bar x(\tau_h)$ into \eqref{errorkh}, one has 
	\begin{align} 
	e_h(k) = \tilde \Theta^T(k) \bar d (\tau_h) - \bar \varepsilon(\tau_h). \label{errorkhA}
	\end{align} 
 \begin{remark}
	It is possible to meet the rank condition by selecting and recording data either during a normal course of operation or when the system is excited over a finite time interval [2].
	\end{remark}
	In the following, two FTCL methods are presented to estimate the parameters $\hat\Theta(k)$ for the system approximator \eqref{app} in finite time. 
	
	\emph{FTCL Method 1:}
	The first proposed FTCL law for estimating the parameters of the system approximator is as follows
	\begin{align}
	{\hat \Theta}(k+1) = & {\hat \Theta}(k) - \Gamma[\Xi_{G}\bar d(k)  e^T(k)  + \Xi_{C} (\sum_{h=1}^{P} \bar d(\tau_h)  e_h^T(k) \nonumber \\& +  \frac{  \sum_{h=1}^{P} \bar d(\tau_h)  e_h^T(k)}{\beta+\|\sum_{h=1}^{P} \bar d(\tau_h)  e_h^T(k)\|})],  \label{Estrule1}
	\end{align}
	where 
	$\Gamma=\gamma I$ is the learning rate matrix with positive constant $\gamma>0$, $\beta$ is a design constant parameter satisfying $\beta >  P b_{\bar \varepsilon}$, $\Xi_{C}= \xi_{C} I$ and $\Xi_{G}= \xi_{G} I$ with positive constants $ \xi_{C}>0$ and $\xi_{G}>0$. 
	
	\emph{FTCL Method 2:}
	The second proposed FTCL law for finite-time parameters' estimation for approximator \eqref{app} is 
	\begin{align}
	{\hat \Theta}(k+1) = & {\hat \Theta}(k) - \bar\Gamma[\bar\Xi_{G}\bar d(k) \lfloor e^T(k) \rceil ^ {\gamma_1}  + \bar\Xi_{C} \sum_{h=1}^{P} \bar d(\tau_h) \lfloor e_h^T(k)\rceil ^ {\gamma_1} ],  \label{Estrule2}
	\end{align}
	where $\lfloor . \rceil ^ {\gamma_1}:= |.|^ {\gamma_1} sign(.)$ such that $|.|$ and $sign(.)$ are component-wise operators and $0 < {\gamma_1}<1$, $\bar\Gamma=\bar\gamma I$ is the learning rate matrix with constant $\bar\gamma>0$, $\bar\Xi_{C}= \bar\xi_{C} I$ and $\bar\Xi_{G}= \bar\xi_{G} I$ with constants $ \bar\xi_{C}>0$ and $\bar\xi_{G}>0$. 
	
	The above estimation laws \eqref{Estrule1} and \eqref{Estrule2} have two learning terms where the term $\Xi_{G}\bar d(k)  e^T(k)$ in \eqref{Estrule1} ($\bar\Xi_{G}\bar d(k)  \lfloor e^T(k) \rceil ^ {\gamma_1} $ in \eqref{Estrule2}), containing the current state approximation error, is widely used in the gradient descent method and the term $\Xi_{C}  (\sum_{h=1}^{P} \bar d(\tau_h)  e_h^T(k) +  \frac{  \sum_{h=1}^{P} \bar d(\tau_h)  e_h^T(k)}{\beta+\|\sum_{h=1}^{P} \bar d(\tau_h)  e_h^T(k)\|})$ in \eqref{Estrule1} ($\bar\Xi_{C} \sum_{h=1}^{P} \bar d(\tau_h) \lfloor e_h^T(k)\rceil ^ {\gamma_1}  $ in \eqref{Estrule2}), containing the past experienced data, is called the concurrent learning term. For the learning weights $\Xi_{C}$ and $\Xi_{G}$ in \eqref{Estrule1} ($\bar\Xi_{C}$ and $\bar\Xi_{G}$ in \eqref{Estrule2}), the constants $\xi_{C}$ and $\xi_{G}$ ($\bar\xi_{C}$ and $\bar\xi_{G}$) are, respectively, set such that one of the two learning terms can be prioritized over the other. 
	
	
	 \begin{remark}  Before the completion of the first $P$ steps of learning, required for filling the data stacks in (14), we set $\Xi_C=0$ and $\bar \Xi_C=0$, respectively, in \eqref{Estrule1} and \eqref{Estrule2} such that they only employ current data to update the estimated parameters.\end{remark}
\begin{remark} 
	FTCL method 2, in comparison with FTCL method 1, does not need the knowledge of $b_{\bar\varepsilon}$ where in \eqref{Estrule1} $\beta$ should satisfy $\beta> P b_{\bar\varepsilon}$.\end{remark}
 \begin{remark} 
 In contrast to the studies that design the controller to extract rich data for system identification [24], this paper objective is to identify the unknown dynamics regardless of the controller design.
\end{remark}
	
\section{Finite-time convergent analysis for the proposed FTCL Methods}	
	In this section, the finite-time convergence properties of the proposed learning methods are given. It should be noted that in the proposed FTCL methods, the stored data in $M$ and other stacks is selected based on data recording algorithm in [25] to maximize $\frac{\lambda_{min}(S)}{\lambda_{max}(S)}$ where $ S=\sum_{h=1}^{P} \bar d(\tau_h)\bar d^T(\tau_h)$, and due to the satisfaction of $M$ rank condition, $ S>0$.
	\begin{theorem}
		Consider the approximator for nonlinear system \eqref{sys} given by \eqref{app}, whose parameters are estimated using the estimation law \eqref{Estrule1} with the regressor given by \eqref{regdl}. Let Assumption 1 hold. Once the $M$ rank condition and 
		\begin{align}
			\gamma < \frac{2 \xi_{C} \lambda_{min}(S) } { (\xi_{G}  + \xi_{C} \lambda_{max}(S) (1+\frac{1}{\beta}))^2}, \label{gamacond}
			\end{align} are satisfied, then
	\begin{enumerate}
		\item for adaptive approximators with zero MFAE, i.e., $\bar \varepsilon(k)=0$, the parameter estimation law \eqref{Estrule1} ensures that $\tilde \Theta(k)$ converges to zero within finite time steps and a settling-time function
		\begin{align}
	K_1^*(\tilde\Theta_0) \le \lfloor \frac{V(\tilde\Theta_0)}{\mathpzc{a_{\gamma}}(\frac{\mathpzc{b_{\gamma}}}{1-\mathpzc{a_{\gamma}}})^2+ \mathpzc{b_{\gamma}}(\frac{\mathpzc{b_{\gamma}}}{1-\mathpzc{a_{\gamma}}})} \rfloor +1; \label{k1}
	\end{align}   
	\item  for adaptive approximators with non-zero MFAE, i.e., $\bar \varepsilon(k) \ne 0$, ($\|\bar \varepsilon(k) \| \le b_{\bar \varepsilon}$), the parameter update law \eqref{Estrule1} guarantees that ${\tilde \Theta}(k)$ is finite-time attractive to the bound,  
	\begin{align}
	\|\tilde \Theta(k)\| \le b_{\tilde \Theta},\,\,\,b_{\tilde \Theta} = \frac{-\mathpzc{b_u} - \sqrt{(\mathpzc{b_u})^2 - 4(\mathpzc{a})\mathpzc{c}}}{2\mathpzc{a}}, \label{btheta} 
	\end{align} with a settling-time function
	\begin{align}
		  K_2^*(\tilde\Theta_0) \le \lfloor  \frac{V(\tilde\Theta_0)-\gamma^{-1} (b_{\tilde\Theta})^2}{\mathpzc{a_{\gamma}} (\gamma^{-1} (b_{\tilde\Theta})^2) -\mathpzc{b_u} \|\tilde \Theta_0\| - \mathpzc{c}}  \rfloor +1, \label{k2}
		 \end{align} 
	such that 
	\begin{align}
	&\mathpzc{a_{\gamma}} = -{\gamma}\mathpzc{a} ,\,\,\, \,\,\, \,\,\, \,\,\, \,\,\, \mathpzc{b_{\gamma}} =    \frac{2 \sqrt {\gamma} \xi_{C} }{(\eta +1)} (\frac{\lambda_{min}(S)}{\lambda_{max}(S)})  ,\label{agama}  
	\\&\mathpzc{a}= -2\xi_{C}\lambda_{min}(S) +\xi_{G}^2 \gamma  + 2 \gamma \xi_{G} \xi_{C} {\lambda_{max}(S) } (\frac{1}{\beta}+1)  \nonumber \\ & + \gamma \xi_{C}^2 \lambda_{max}^2(S) (1+\frac{1}{\beta})^2, \label{a}  \\ 
		&\mathpzc{b_u}= \frac{2 \xi_{C} }{(\eta +1)} (\frac{\lambda_{min}(S)}{\lambda_{max}(S)})+ 2 b_{\bar \varepsilon}(\xi_{C}   \gamma \lambda_{max}(S)(\xi_{G} \nonumber \\ & +  \xi_{C} P(1+ \frac{1}{\beta^2}))  +  \xi_{C} P(\gamma \xi_{G}    + 1) +\xi_{G}  +  \gamma \xi_{G}^2),  \label{bu}  \\ &\mathpzc{c}= b_{\bar \varepsilon}[ \frac{2 \xi_{C}{P} }{\lambda_{min}(S)} + \gamma( \xi_{G}^2 b_{\bar \varepsilon} + 2 \xi_{C}\xi_{G} +  P(  2 \xi_{C}\xi_{G} b_{\bar \varepsilon} + \xi^2_{C} P b_{\bar \varepsilon} \nonumber \\ &  +2 \xi^2_{C}+ 2 \xi^2_{C} \frac{\lambda_{max}(S) }{\lambda_{min}(S)}+ \frac{2 \xi_{G}\xi_{C} }{\lambda_{min}(S)}  +\frac{\xi^2_{C} P b_{\bar \varepsilon}} {\beta^2}  ))]. \label{c} 
	\end{align} 
		\end{enumerate}		
	\end{theorem}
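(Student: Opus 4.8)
The plan is to construct a discrete-time Lyapunov function $V(\tilde\Theta(k)) = \gamma^{-1}\,\mathrm{tr}(\tilde\Theta^T(k)\tilde\Theta(k)) = \gamma^{-1}\|\tilde\Theta(k)\|_F^2$ and to compute $\Delta V(\tilde\Theta(k)) = V(\tilde\Theta(k)) - V(\tilde\Theta(k-1))$ along trajectories of the update law \eqref{Estrule1}. First I would shift the index in \eqref{Estrule1} to express $\tilde\Theta(k) = \tilde\Theta(k-1) - \Gamma[\cdots]$ (since $\tilde\Theta$ and $\hat\Theta$ differ by the constant $\Theta^*$, their increments coincide), substitute into $V$, and expand the trace into a linear cross term plus a quadratic remainder term. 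Using $e(k) = \tilde\Theta^T(k)\bar d(k) - \bar\varepsilon(k)$ from \eqref{errork} and $e_h(k) = \tilde\Theta^T(k)\bar d(\tau_h) - \bar\varepsilon(\tau_h)$ from \eqref{errorkhA}, the cross term yields a negative definite contribution $-2\xi_C\,\mathrm{tr}(\tilde\Theta^T S \tilde\Theta) \le -2\xi_C\lambda_{min}(S)\|\tilde\Theta\|_F^2$ plus the gradient-descent term and the normalized concurrent term, while the $\bar\varepsilon$-dependent pieces are collected as perturbations bounded via $\|\bar d(k)\|<1$, $\|\bar\varepsilon(k)\|\le b_{\bar\varepsilon}$, and $\|S\|\le\lambda_{max}(S)$. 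The quadratic remainder is bounded above using $\|\bar d(k)\| < 1$, Fact 1, and the triangle/Cauchy–Schwarz inequalities, which is precisely where the coefficient $\mathpzc{a}$ in \eqref{a} emerges: the learning-rate condition \eqref{gamacond} is exactly what makes $\mathpzc{a} < 0$, hence $\mathpzc{a_\gamma} = -\gamma\mathpzc{a} \in (0,1)$ after further bounding $\gamma$ suitably small.

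For part (1), with $\bar\varepsilon \equiv 0$, the perturbation terms vanish and the normalized concurrent term $\xi_C \|\sum_h \bar d(\tau_h) e_h^T\|/(\beta + \|\sum_h \bar d(\tau_h)e_h^T\|)$ contributes a term proportional to $\|\tilde\Theta^T S \tilde\Theta\|/(\beta + \|\cdots\|)$; I would lower-bound this using $\|\tilde\Theta^T S\tilde\Theta\| \ge \lambda_{min}(S)\|\tilde\Theta S^{1/2}\|^2 \ge (\lambda_{min}(S)/\lambda_{max}(S))\cdot\mathrm{something}$ and relate it to $V^{1/2}$, producing a term of the form $-\mathpzc{b_\gamma} V^{1/2}(\tilde\Theta(k-1))$ with $\mathpzc{b_\gamma}$ as in \eqref{agama} (note $\mu = 1/2$ here, matching the exponents $2 = 1/(1-\mu)$ in \eqref{k1}). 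This puts $\Delta V \le -\mathpzc{a_\gamma} V(\tilde\Theta(k-1)) - \mathpzc{b_\gamma}(V(\tilde\Theta(k-1)))^{1/2}$, so Lemma 1 with $a = \mathpzc{a_\gamma}$, $b = \mathpzc{b_\gamma}$, $\mu = 1/2$ applies directly and yields \eqref{k1}.

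For part (2), with $\bar\varepsilon(k)\ne 0$, the same expansion carries the extra affine-in-$\|\tilde\Theta\|$ and constant terms, giving a bound of the form $\Delta V \le \mathpzc{a}\gamma^{-1}\|\tilde\Theta(k-1)\|^2 \cdot(-1)\cdot(-\gamma) \dots$ — more precisely $\Delta V(\tilde\Theta(k)) \le -\mathpzc{a_\gamma} V(\tilde\Theta(k-1)) - \mathpzc{b_u}\|\tilde\Theta(k-1)\| - \mathpzc{c}$ with $\mathpzc{b_u}, \mathpzc{c}$ as in \eqref{bu}, \eqref{c} (collecting all $b_{\bar\varepsilon}$-linear and $b_{\bar\varepsilon}$-quadratic terms). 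Completing the square / analyzing when the right-hand side is negative, $\Delta V < 0$ holds as long as $\|\tilde\Theta(k-1)\| > b_{\tilde\Theta}$, where $b_{\tilde\Theta}$ is the positive root of $\mathpzc{a}\rho^2 + \mathpzc{b_u}\rho + \mathpzc{c} = 0$ rewritten via $V = \gamma^{-1}\rho^2$; since $\mathpzc{a} < 0$, the expression $(-\mathpzc{b_u} - \sqrt{\mathpzc{b_u}^2 - 4\mathpzc{a}\mathpzc{c}})/(2\mathpzc{a})$ in \eqref{btheta} is the relevant positive root. Outside the ball $\|\tilde\Theta\| \le b_{\tilde\Theta}$ one has $\Delta V \le \mathpzc{a_\gamma}(\gamma^{-1}(b_{\tilde\Theta})^2) - \mathpzc{b_u}\|\tilde\Theta_0\| - \mathpzc{c} < 0$ (a strictly negative constant, using monotonicity of $V(\tilde\Theta(k))$ while outside, so $\|\tilde\Theta(k)\|\le\|\tilde\Theta_0\|$), which forces $V$ to decrease by at least a fixed amount each step until entering the ball — giving the finite settling-time bound \eqref{k2}, and finite-time attractiveness per Definition 2.

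The main obstacle I anticipate is the bookkeeping in the quadratic remainder term: expanding $\|\Gamma[\Xi_G\bar d e^T + \Xi_C(\sum_h \bar d(\tau_h)e_h^T)(1 + 1/(\beta+\|\cdot\|))]\|_F^2$ into its six-or-more cross products, bounding each by the appropriate product of $\xi_G, \xi_C, \gamma, \lambda_{max}(S), \beta^{-1}, P, b_{\bar\varepsilon}$, and then regrouping exactly into the stated $\mathpzc{a}, \mathpzc{b_u}, \mathpzc{c}$ — in particular getting the normalization-induced $(1 + 1/\beta)$ and $(1 + 1/\beta^2)$ factors to land in the right places requires care, as does handling the $1/(\beta + \|\sum_h\bar d(\tau_h)e_h^T\|)$ denominator (bounded below by $\beta$ from above, and kept as-is from below for the part-(1) term). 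A secondary subtlety is ensuring the lower bound $\beta > P b_{\bar\varepsilon}$ is actually used — it guarantees the normalized concurrent term does not destabilize when $\bar\varepsilon \ne 0$ — and verifying that the settling-time denominator in \eqref{k2} is positive, which ties back to $\mathpzc{a_\gamma}\in(0,1)$ and the choice of $\gamma$.
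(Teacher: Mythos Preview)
Your plan is essentially the paper's own argument: same Lyapunov function $V=\mathrm{tr}(\tilde\Theta^T\Gamma^{-1}\tilde\Theta)$, same expansion of $\Delta V$ via \eqref{Estrule1}, same reduction to a quadratic inequality $\Delta V\le \mathpzc{a}\|\tilde\Theta(k-1)\|^2+\mathpzc{b}\|\tilde\Theta(k-1)\|+\mathpzc{c}$, same appeal to Lemma~1 for part~(1), and the same root-of-quadratic plus telescoping sum for part~(2). Two small points: (i) in your part-(2) display the signs on $\mathpzc{b_u}$ and $\mathpzc{c}$ should be $+$, not $-$, since these are perturbation terms that \emph{increase} $\Delta V$; your subsequent reasoning (``$\Delta V<0$ only outside the ball'') is consistent with the correct signs. (ii) The precise mechanism giving the $V^{1/2}$ term is that the normalized concurrent piece contributes $\mathrm{tr}\{-2\tilde\Theta^T S\Xi_C\tilde\Theta/(\beta+\|S\tilde\Theta-D_{\bar\varepsilon}\|)\}$, whose numerator is quadratic and whose denominator is linear in $\|\tilde\Theta\|$; the paper handles the denominator via the auxiliary bound $\beta+\|S\tilde\Theta\|+Pb_{\bar\varepsilon}\le(\eta+1)\|S\tilde\Theta\|$ for some $\eta>0$, which is where the $(\eta+1)^{-1}$ in $\mathpzc{b_\gamma}$ originates --- worth making explicit, since your sketch of that step is a bit loose.
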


	\begin{proof}
	Please see Appendix.
	
		\begin{remark}
		Contrary to [19] that only considered the special case of zero MFAE for adaptive approximators, Theorem 1 represents rigorous proofs for finite-time convergence of adaptive approximators with non-zero MFAEs.
		\end{remark}
	
	\begin{theorem}
	Let Assumption 1 hold and consider the approximator for system \eqref{sys} given in \eqref{app}, whose parameters are estimated using the estimation law of \eqref{Estrule2} with $0 < {\gamma_1}<1$ and a regressor given in \eqref{regdl}. Once the rank condition on $M$ and 
		\begin{align}
		&	\bar\gamma<\frac{A}{B}, \label{gamacond1}
		\\&	A= 2 \xi_{G} \lambda_{min}^{\frac{\gamma_1+1}{2}}(\bar D(k-1))+2 \xi_{C} \lambda_{min}^{\frac{\gamma_1+1}{2}}(S),\label{A}\\&
		B=\xi_{G}^2 n^{1-\gamma_1} \lambda_{max}^{\frac{\gamma_1+1}{2}}( D(k-1))  + \xi_{C}^2 n^{1-\gamma_1} \lambda_{max}^{\frac{\gamma_1+1}{2}}(S) \nonumber \\& +2 \xi_{C} \xi_{G} n^{1-\gamma_1} \lambda_{max}^{\frac{\gamma_1+1}{2}}(S). \label{B}
		\end{align} 
	are satisfied, then 
	\begin{enumerate}
	    \item 	for adaptive approximators with zero MFAEs, $\bar \varepsilon(k)=0$, the estimation law \eqref{Estrule2} ensures the finite-time convergence of $\tilde \Theta(k)$ to zero and the settling-time function ${\bar K}_1^*(\Theta_0)$ satisfies
	\begin{align}
			&{\bar K}_1^*(\tilde\Theta_0) \le \lfloor \log_{[1-\alpha'V(\tilde\Theta_0)^{\frac{\gamma_1-1}{2}}]} \frac{\alpha'^{\frac{2}{1-\gamma_1}}}{V(\tilde\Theta_0)}\rfloor +1; \label{K111}
		\end{align} 
	\item for adaptive approximators with non-zero MFAE, i.e., $\bar \varepsilon(k) \ne 0$, ($\|\bar \varepsilon(k) \| \le b_{\bar \varepsilon}$), the parameter update law \eqref{Estrule2} guarantees that ${\tilde \Theta}(k)$ is finite-time attractive to the bound $\|	\tilde \Theta(k) \|\le \bar b_{\tilde \Theta}$ with the settling-time function
	\begin{align}
		  {\bar K}_2^*(\tilde\Theta_0) \le \lfloor  \frac{V(\tilde\Theta_0)-\gamma^{-1} (\bar b_{\tilde\Theta})^2}{{\alpha'} \gamma^{-\frac{\gamma_1+1}{2}} (\bar b_{\tilde\Theta})^{\gamma_1+1} -{b'} \|\tilde \Theta_0\| - c'}  \rfloor +1, \label{k22}
		 \end{align} 
such that $\bar b_{\tilde \Theta}$ is the positive root of equation in \eqref{DelV20},	 
		\begin{align}
		&\alpha'=a' \bar\gamma^{\frac{\gamma_1+1}{2}},\,\,\,a'=A -\bar \gamma B, \label{aprime}\\
		&b'=2b_{\bar \varepsilon}^{\gamma_1} n^{\frac{1-\gamma_1}{2}} (\bar \xi_G  + \bar \gamma \bar \xi_G^2 n^{\frac{1-\gamma_1}{2}} + \bar \gamma \bar \xi_C \bar \xi_G n^{\frac{1-\gamma_1}{2}} \nonumber \\&+ \lambda_{max}^{\frac{1}{2}}(S)(\bar \xi_C + \bar \gamma  \bar \xi_C \bar \xi_G n^{\frac{1-\gamma_1}{2}}   + \bar \gamma \bar \xi_C^2 n^{\frac{1-\gamma_1}{2}})),\label{bprime} \\&
		c'=\bar \gamma( n^{1-{\gamma_1}} b_{\bar \varepsilon}^{2\gamma_1}(\bar \xi_{G}^2  +2 \bar \xi_{C} \bar \xi_{G} {P})  +  \bar \xi_{C}^2 {P^2} n^{\frac{1-\gamma_1}{2}} b_{\bar \varepsilon}^{\gamma_1} ).\label{cprime}
		\end{align}
	\end{enumerate}
	\end{theorem}

\end{proof}

\begin{remark}
The settling-time functions satisfying \eqref{k1} and \eqref{K111} are, respectively, obtained for $V(\tilde\Theta_0)>(\frac{\mathpzc{b_{\gamma}}}{1-\mathpzc{a_{\gamma}}})^2$ and $V(\tilde\Theta_0)>\alpha'^{\frac{2}{1-\gamma_1}}$. In Theorems 1 and 2, one has $\bar K_1^*(V(\tilde\Theta_0))=\bar K_2^*(V(\tilde\Theta_0))=1$, respectively, for $V(\tilde\Theta_0)\le(\frac{\mathpzc{b_{\gamma}}}{1-\mathpzc{a_{\gamma}}})^2$ and $V(\tilde\Theta_0)\le\alpha'^{\frac{2}{1-\gamma_1}}$.        
\end{remark}

\begin{remark}
	To have a faster convergence time in \eqref{k1} and \eqref{K111}, maximizing $\mathpzc{b_{\gamma}}$ and $\alpha'$, respectively given in \eqref{agama} and \eqref{aprime}, by maximizing $\frac{ \lambda_{min}(S) }{\lambda_{max}(S)}$, is beneficial. Maximizing $\frac{ \lambda_{min}(S) }{\lambda_{max}(S)}$ (maximizing $ \lambda_{min}(S) $ and minimizing $\lambda_{max}(S)$), also leads to enlarging $\mathpzc{a_{\gamma}}$ and $\alpha'$ which causes faster settling-time in \eqref{k2} and \eqref{k22}.
Therefore, while applying the FTCL methods 1 and 2, the data recording algorithm in [25] is used and the recorded data is selected to maximize $\frac{\lambda_{min}(S)}{\lambda_{max}(S)}$. 
	Moreover, the employed data recording algorithm [25], helps to  enlarge $\mathpzc{a}$ and reduce $\mathpzc{c}$ that narrows down the parameters' estimation error bound in \eqref{btheta}, whereas maximizing $\frac{\lambda_{min}(S) }{\lambda_{max}(S)}$ matches with the concepts of concurrent learning in continuous-time [2,26]. Note that $P = p+q$ (satisfying $P \ge p+q$) assists to maximize $\frac{ \lambda_{min}(S) }{\lambda_{max}(S)}$ [25] and keeps $\mathpzc{c}$ and $c'$ small. 
\end{remark}

	\section {Simulation Results}
	
	In this section, the performance of the proposed finite-time concurrent learning methods is examined in comparison with asymptotically converging concurrent learning [18] and traditional gradient descent [1] whose estimation laws are, respectively, given as follows,
	\begin{align}
	&{\hat \Theta}(k+1) = {\hat \Theta}(k) - \Gamma_{G}\bar d(k)  e^T(k) ,\nonumber \\
	&{\hat \Theta}(k+1) = {\hat \Theta}(k) - \Gamma_{C}[\Sigma_{G}\bar d(k)  e^T(k) + \Sigma_{C} {  \sum_{h=1}^{P} \bar d(\tau_h)  e_h^T(k)}], \nonumber 
	\end{align}
	where  $\Gamma_{G}=\gamma_{G}I$, $\Gamma_{C}=\gamma_{C}I$, $\Sigma_{G}=\sigma_{G}I$ and $\Sigma_{C}=\sigma_{C}I$ with positive constants $\gamma_{G}>0$, $\gamma_{C}>0$, $\sigma_{G}>0$ and $\sigma_{C}>0$.
	
	The simulation time span is $[k_0,k_f]$ where $k_0=0$ and $k_f=500$, and $\mathcal{D}_x=[x_L,x_H]$ where $x_L<x_H $ and $x_L,x_H \in \Real$ and $\mathcal{D}_x$ is quantized by $[x_L:\frac{x_H-x_L}{k_f-k_0}:x_H]$. In the FTCL methods 1 and 2, given in \eqref{Estrule1} and \eqref{Estrule2}, respectively, $\gamma$ and $\bar\gamma$ are chosen to satisfy \eqref{gamacond} and  \eqref{gamacond1}, respectively, while choosing large values of $\gamma$ and $\bar\gamma$ may jeopardize finite-time convergence. We choose $\xi_G>\xi_C$ and $\bar \xi_G>\bar \xi_C$ to prioritize current data over recorded data and to avoid chattering. Based on [18], for concurrent learning set
	$\gamma_{C} = \frac{1}{2\sigma_{G} +\sigma_{C}\lambda_{max}(S)}$.
	In all cases, the initial values and the controllers are all set to zero. A small exponential sum of sinusoidal input is added to the controller for ensuring the rank condition on the stored data. 
	To fairly compare the precision and speed of all mentioned online learning methods in approximating $f(x)$ and $g(x)$ on the entire domain of $x$, the online learning errors,
	\begin{align}
	E_f(k)= \int _{\mathcal{D}_x} \| e_f(x(k)) \| d^{n} x,\,\,\,  E_g(k)= \int _{\mathcal{D}_x} \| e_g(x(k)) \| d^{n} x,\nonumber
	\end{align}
are calculated online. In this section, the results of the FTCL methods 1 and 2, the traditional concurrent learning and gradient descent methods are respectively labeled by FTCL1, FTCL2, CL and GD.

	\emph{{Example 1: Approximators with zero MFAE ($\bar \varepsilon(k)=0$)}}
	
       Consider the following system 
	  \begin{align}
	  x(k+1)=p_1 e^{-x(k)} + p_2 e^{-x(k)}\cos(x(k)) + p_3 \frac{1}{1+x(k)}u(k),\nonumber
	  \end{align}
	   where the parameters $[p_1,p_2,p_3]$ are unknown and the regressors are fully known as 
	   \begin{align}
	   z(x(k),u(k))=[e^{-x(k)},e^{-x(k)}\cos(x(k)),\frac{1}{1+x(k)}u(k)], \nonumber
	   \end{align}
	    with $p+q=3$. The true unknown parameters are $[p_1,p_2,p_3]=[-1,1.5,1]$ and $\mathcal{D}_x$ is limited with $x_L=0$ and $x_H=2$. We set $P=3$ for both FTCL methods 1 and 2 and concurrent learning method. Let $\gamma_{G}=0.7$ for gradient descent method, $\sigma_{G}=1$, $\sigma_{C}=0.3$ for concurrent learning method, and $\xi_{G}=1$, $\xi_{C}=0.3$ and $\beta=0.3$ for FTCL method 1, $\gamma_{1}=0.6$, $\bar\xi_{G}=1$ and $\bar\xi_{C}=0.3$ for FTCL method 2.
	   Based on the obtained results, the rank condition on $M$ matrix is satisfied in the first $p+q=3$ steps. Therefore, $P$ is chosen as $P=3$ which also satisfies $P\ge p+q$. After $P$ steps, the data selection algorithm [25] is employed to improve the richness of the recorded data for FTCL methods 1 and 2 and concurrent learning method. 
	   
	   Fig. 1 shows the true parameters and the estimated parameters for FTCL methods 1 and 2, concurrent learning and gradient descent methods. In Fig. 1, while gradient descent did not converge to the true parameters, FTCL methods 1 and 2 and concurrent learning succeeded in convergence. However, FTCL methods 1 and 2 resulted in faster convergence to the true parameters in comparison with concurrent learning. The online learning errors $E_f(k)$ and $E_g(k)$ for the FTCL methods 1 and 2, concurrent learning and gradient descent are plotted in Fig. 2 where FTCL methods 1 and 2 show faster convergence to the origin. 
	   The integral absolute errors (IAEs) of $E_f(k)$ and $E_g(k)$ for all methods are compared in Table 1 where FTCL method 2 with IAEs $28.46$ and $54.90$, respectively, for $E_f(k)$ and $E_g(k)$ showed the best precision compared with other methods.  
	\begin{figure}
		\centering
    	\includegraphics[width=3 in,height=2 in]{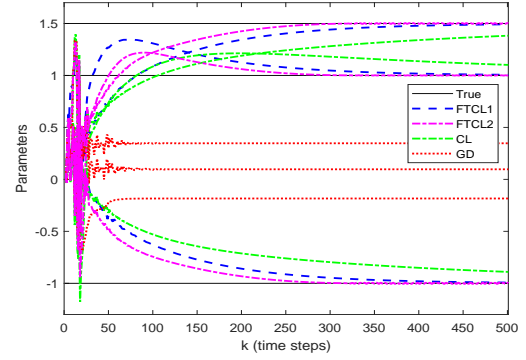}
		\caption{Estimated parameters for Example 1.}
		\label{figurelabel}
	\end{figure}
 \begin{figure}
	\centering
	\includegraphics[width=3 in,height=2 in]{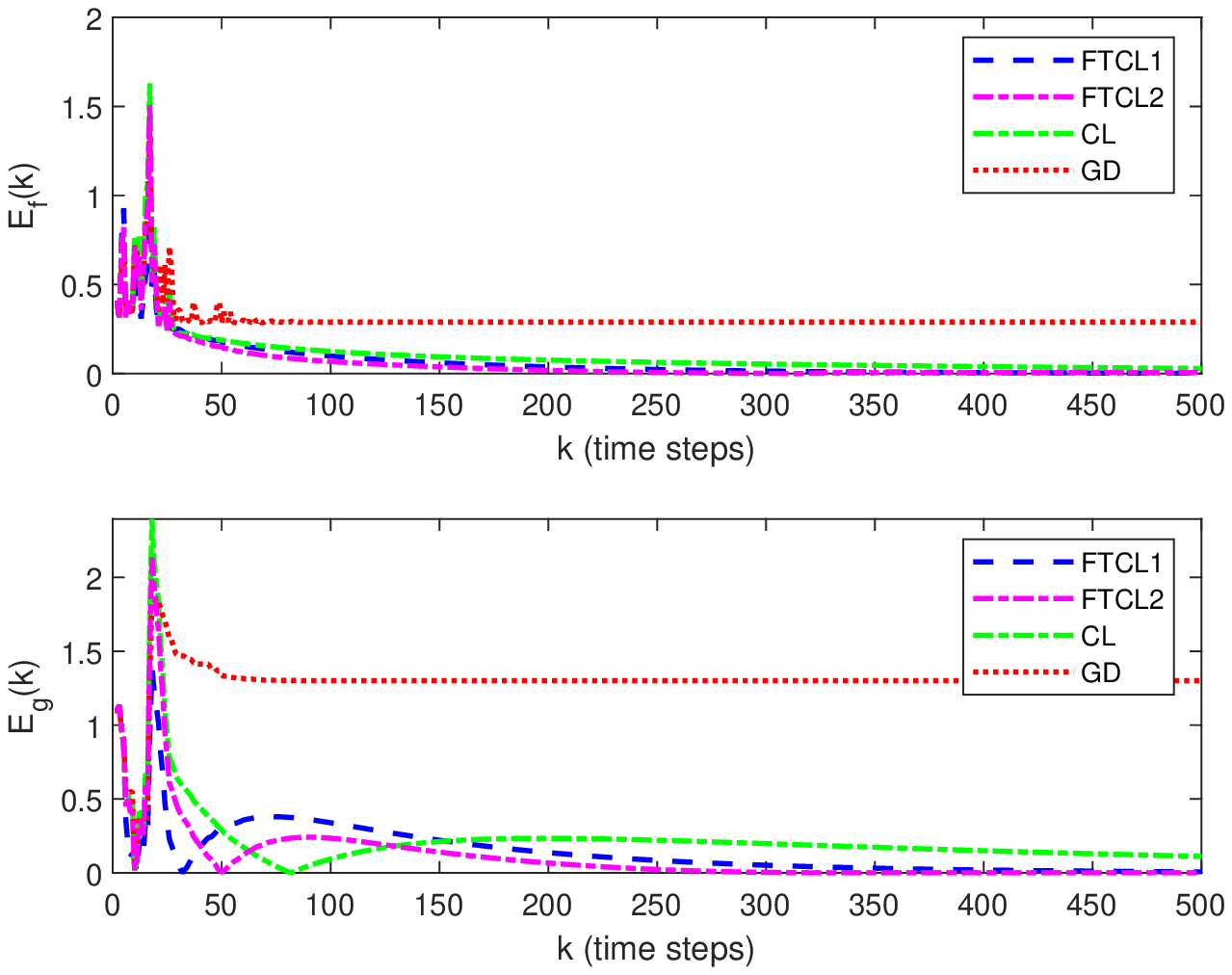}
	\caption{Learning errors for Example 1.}
	\label{figurelabel}
  \end{figure}

\begin{table}
	\begin{center}
		\caption{Learning errors comparison}
		
		\label{tab:table1}
		\centering
		\begin{tabular}{|m{0.8cm} | m{1.5cm} m{1.5cm} | m{1.5cm} m{1.5cm} | } 
			\hline
			{ } & \multicolumn{2}{c|}{Example 1} & \multicolumn{2}{c|} {Example 2}  \\ 
			\text{} & \centering\text{IAE $E_f(k)$} & \centering\text{IAE $E_g(k)$}&\centering\text{IAE $E_f(k)$}& \text{IAE $E_g(k)$} \\
			\hline
			\centering {FTCL2} & \centering{28.46} & \centering{54.90} & \centering{184.39} & {235.23} \\
			\hline
			\centering {FTCL1} & \centering{33.15} & \centering{72.59} & \centering{170.28} & {247.60} \\
			\hline
			\centering{CL} & \centering{51.31} & \centering{113.17} & \centering{234.62} & {269.86} \\ 
			\hline
			\centering{GD} & \centering{152.39} & \centering{645.29} & \centering{635.87} & {675.81} \\ 
			\hline
		\end{tabular}
	\end{center}
\end{table}

\emph{{Example 2: Approximators with non-zero MFAE ($\bar \varepsilon(k) \ne 0$)}}

	Now, consider the following system 
	\begin{align}
	x(k + 1) = 0.5x(k)\sin(0.5x(k)) + (2 + \cos (x(k)))u(k), \label{Ex2}
	\end{align}	
	where the associated $f(x)$ and $g(x)$ are fully unknown uncertainties.
	
	In this example, radial basis function neural networks 
	are employed and 5 radial basis functions $e^ { - \frac{{{{\left\| {x(k) - {c_i}} \right\|}^2}}}{{2{\sigma _i}^2}}}$, $i = 1,2,...,5$
	are used with the centroids $c_i$, uniformly picked on  $\mathcal{D}_x=[x_L,x_H]=[-2,2]$, and the spreads $\sigma _i=1.2$. Hence, the regressor is obtained as follows, 
	\begin{align}
	z(x(k),u(k)) = [&e^{ - \frac{{{{\left\| {x(k) - ( - 2)} \right\|}^2}}}{{2{{(1.2)}^2}}}},...,e^{ - \frac{{{{\left\| {x(k) - (2)} \right\|}^2}}}{{2{{(1.2)}^2}}}},\nonumber\\& e^{ - \frac{{{{\left\| {x(k) - ( - 2)} \right\|}^2}}}{{2{{(1.2)}^2}}}}u(k),...,e^ { - \frac{{{{\left\| {x(k) - (2)} \right\|}^2}}}{{2{{(1.2)}^2}}}}u(k)]^T, \nonumber
	\end{align}
	with 10 independent basis functions ($p+q=10$). The rank condition on $M$ matrix is satisfied in the first $p+q=10$ steps. Thus, $P$ is chosen as $P=10$, satisfying $P \ge p+q$. The data selection algorithm in [25] is employed after the first 10 steps to improve the richness of the recorded data. The approximation of \eqref{Ex2} is given as $x(k + 1) =\hat{\Theta}^T(k) z(x(k),u(k)) =[p_1,p_2,...,p_{10}] z(x(k),u(k))$.
	 Employing $\gamma_{G}=0.8$ for gradient descent method, $\sigma_{G}=1.2$ and $\sigma_{C}=0.1$ for concurrent learning method, $\xi_{G}=1$, $\xi_{C}=0.1$ and $\beta=0.65$ for FTCL method 1,  $\gamma_{1}=0.7$, $\bar\xi_{G}=1$ and $\bar\xi_{C}=0.05$ for FTCL method 2 leads to the estimated parameters $p_1,p_2,...,p_{10}$ depicted on Fig. 3. Due to the lack of persistence of excitation, gradient descent parameters in Fig. 3 could not converge to the appropriate parameters, while FTCL methods 1 and 2 and concurrent learning method 
	 converged to the appropriate parameters. The steady state estimations for $f(x)$ and $g(x)$ are given in Fig. 4. Comparing the learning errors $E_f(k)$ and $E_g(k)$ in Fig. 5 shows that the gradient descent method did not succeed in learning the uncertainty, however FTCL methods 1 and 2 and concurrent learning resulted in bounded learning error convergence near zero. Fig. 5 shows that FTCL methods 1 and 2, in comparison with concurrent learning method are faster in convergence to smaller bounds near zero. 
	 Moreover, based on IAEs of $E_f(k)$ and $E_g(k)$ in Table 1, FTCL methods 1 and 2 result in lower IAEs in comparison with concurrent learning and gradient descent.  
	  \begin{figure}[thpb]
		\centering
		\includegraphics[width=3 in,height=2 in]{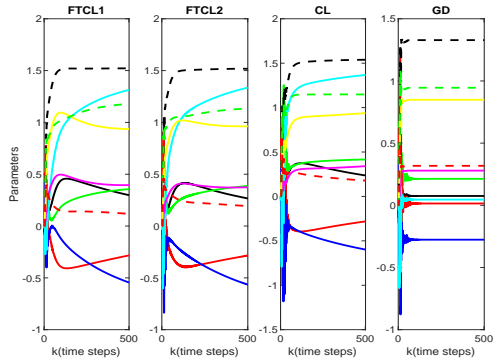}
		\caption{Estimated parameters for Example 2.}
		\label{figurelabel}
	 \end{figure}
 \begin{figure}[thpb]
 	\centering
 	\includegraphics[width=3 in,height=2 in]{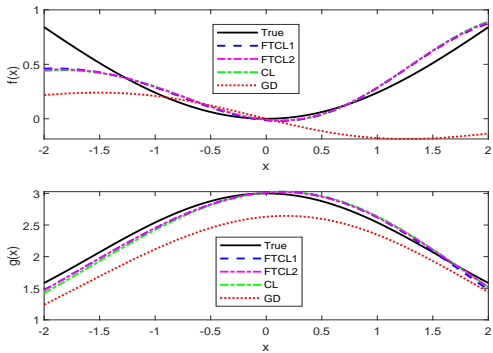}
 	\caption{Steady-state uncertainty approximations.}
 	\label{figurelabel}
 \end{figure}
 \begin{figure}[thpb]
 	\centering
 	\includegraphics[width=3 in,height=2 in]{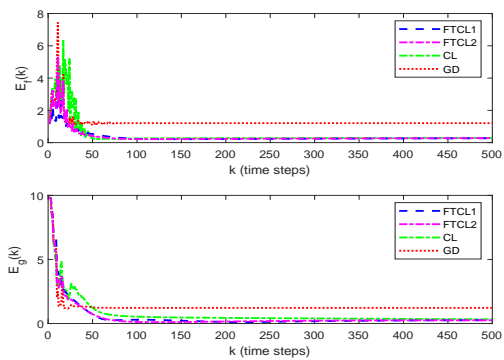}
 	\caption{Learning errors for Example 2.}
 	\label{figurelabel}
 \end{figure}

	\section{Conclusion}
	This paper addressed finite-time identification methods of discrete-time system dynamics where finite-time learning could speed up the learning and concurrent learning technique relaxed the persistence of excitation condition on the regressor to a rank condition on the memory stack of recorded data. For the proposed methods, learning rate conditions were obtained for finite-time convergence based on discrete and finite time analysis. It was discussed that the speed and precision of the presented finite-time identification methods depend on the well-conditioning properties of the memory data. Simulation results are given where it is shown that the presented finite-time concurrent learning methods have better performance in comparison with the traditional gradient descent and asymptotic convergent concurrent learning methods in terms of precision and convergence speed. 

	\section {Appendix: Proof of Theorems 1 and 2}
	\emph{{Proof of Theorem 1.}}
		Consider the Lyapunov function candidate $V(k)$ and its rate of change of $\Delta V(k)$, respectively, as follows
		\begin{align}
		V&(\tilde\Theta(k)) = tr\big\{ \tilde \Theta^T(k) \Gamma^{-1}\tilde \Theta(k)\big\}, \label{VT1}\\
		\Delta V&(\tilde\Theta(k))=V(\tilde\Theta(k))-V(\tilde\Theta(k-1))\nonumber\\&= tr\big\{ \tilde \Theta^T(k) \Gamma^{-1}\tilde \Theta(k)- \tilde \Theta^T(k-1) \Gamma^{-1}\tilde \Theta(k-1)\big\}\nonumber\\& = tr\big\{ (\tilde \Theta(k)-\tilde \Theta(k-1) )^T \Gamma^{-1} (\tilde \Theta(k)+\tilde \Theta(k-1))\big\}.  \label{DelVT1}
		\end{align} 
				
		Using \eqref{errork}, \eqref{errorkhA} and \eqref{Estrule1}, \eqref{DelVT1} can be written as, 
		\begin{align}
		\Delta V&(\tilde\Theta(k))= tr\big\{ \tilde \Theta^T(k-1)[-2\Xi_{G}D(k-1) + \Gamma \Xi_{C}^2 S^2 \nonumber\\& +\Xi_{G}^2 \Gamma  D(k-1)^T D(k-1)+ 2\Gamma\Xi_{G}\Xi_{C}D(k-1)S \nonumber\\& -2\Xi_{C}S] \tilde\Theta(k-1) +2 D^T_{\beta}(k-1) \Gamma \Xi_{C}^2 S \tilde\Theta(k-1)
		\nonumber\\&-2  D^T_{\beta}(k-1)\Xi_{C}  \tilde\Theta(k-1) \nonumber\\&
		 +2  D^T_{\beta}(k-1) \Gamma \Xi_{G} \Xi_{C} D(k-1)  \tilde\Theta(k-1) \nonumber\\&+ 2 [ \bar \varepsilon(k-1) \bar d^T(k-1) \Xi_{G}  -  D^T_{\bar \varepsilon} \Gamma \Xi_{G} \Xi_{C} D(k-1) \nonumber\\&- \bar \varepsilon(k-1) \bar d^T(k-1)  \Xi_{G}^2 \Gamma D(k-1)  -  D^T_{\bar \varepsilon} \Gamma \Xi_{C}^2 S \nonumber\\&  -\bar \varepsilon(k-1) \bar d^T(k-1)  \Gamma \Xi_{G} \Xi_{C} S  +  D^T_{\bar \varepsilon} \Xi_{C}] \tilde \Theta(k-1)   \nonumber\\& +  [\bar \varepsilon(k-1) \bar d^T(k-1)  \Gamma \Xi_{G}^2 -2 D^T_{\beta}(k-1) \Gamma  \Xi_{C} \Xi_{G} \nonumber\\& +2  D^T_{\bar \varepsilon} \Gamma \Xi_{G} \Xi_{C}]  \bar d(k-1) \bar \varepsilon^T(k-1)   + D^T_{\bar \varepsilon} \Gamma \Xi_{C}^2  D_{\bar \varepsilon} \nonumber\\& -2  D^T_{\bar \varepsilon} \Gamma \Xi_{C}^2 D_{\beta}(k-1)  + D^T_{\beta}(k-1) \Gamma \Xi^2_{C} D_{\beta}(k-1)\big\}, \label{DelVT1U2}
		\end{align}
			where $D_{\bar \varepsilon}=\sum_{h=1}^{P}  \bar d(\tau_h) \bar  \varepsilon^T(\tau_h)$, $D(k)=\bar d(k)\bar d^T(k)$ with $ \|D(k)\|<1$ and $D_{\beta}(k)=\frac{  \sum_{h=1}^{P}  \bar d(\tau_h)  e_h^T(k)}{\beta+\|\sum_{h=1}^{P} \bar d(\tau_h)  e_h^T(k)\|} = \frac{  S \tilde \Theta(k) -D_{\bar \varepsilon} }{\beta+\| S \tilde \Theta(k) -D_{\bar \varepsilon} \|}$.
			Based on $\|\bar \varepsilon(k)\| \le b_{\bar \varepsilon}$ and $\|\bar d(k)\| < 1$, we have $\|\bar d(k-1) \bar \varepsilon^T(k-1)\| \le  b_{\bar \varepsilon}$ and $\|D_{\bar \varepsilon} \| \le  {P} b_{\bar \varepsilon}$. Using Fact 1,
		\begin{align}
		\beta+\|S \tilde \Theta(k-1)\| -{P} b_{\bar \varepsilon}  &\le 	\beta+\|S \tilde \Theta(k-1) - D_{\bar \varepsilon} \| \nonumber\\& \le \beta+\|S \tilde \Theta(k-1)\| + {P} b_{\bar \varepsilon},  \label{betaS1}  
		\end{align}
		where $\beta$ is a positive constant satisfying $\beta -P  b_{\bar \varepsilon} > 0$. For some $ \eta > 0 $, one can write 
		\begin{align}
	    \beta+\|S \tilde \Theta(k-1)\| + {P} b_{\bar \varepsilon}  \le (\eta+1)\|S \tilde \Theta(k-1)\|. \label{betaS2}   
		\end{align}
				
		Using \eqref{betaS1} and \eqref{betaS2}, the second, third, fourth and the last terms on the right side of \eqref{DelVT1U2}, are, respectively, upper bounded by 
		\begin{align}
		tr&\big\{2  D^T_{\beta}(k-1) \Gamma \Xi_{C}^2 S \tilde\Theta(k-1)\big\} \nonumber\\& =  tr\big\{  \frac{2\tilde \Theta^T(k-1) S^T \Gamma \Xi_{C}^2 S \tilde \Theta(k-1) - 2  D^T_{\bar \varepsilon} \Gamma \Xi_{C}^2 S \tilde \Theta(k-1)}{\beta+\| S \tilde \Theta(k-1) - D_{\bar \varepsilon} \|} \big\} \nonumber\\& \le  \frac{2 \gamma \xi^2_{C}}{\beta} \lambda_{max}^2 (S) \|\tilde \Theta(k-1)\|^2 +2 \gamma \xi^2_{C} P b_{\bar \varepsilon} \frac{\lambda_{max}(S)}{\lambda_{min} (S)},\label{U1}
		\end{align}
		\begin{align}
		tr&\big\{-2  D^T_{\beta}(k-1) \Xi_{C} \tilde\Theta(k-1) \big\} \nonumber\\ &\le tr\big\{  \frac{ -2\tilde \Theta^T(k-1) S \Xi_{C} \tilde \Theta(k-1) }{\beta+\|S\tilde \Theta(k-1) \| +\|D_{\bar \varepsilon} \|} + \frac{2 D^T_{\bar \varepsilon} \Xi_{C} \tilde \Theta(k-1)}{\beta+\| S \tilde \Theta(k-1) - D_{\bar \varepsilon}\|}\big\}	\nonumber\\& \le tr\big\{  \frac{-2 \tilde \Theta^T(k-1) S \Xi_{C} \tilde \Theta(k-1) }{(\eta +1)\|S\tilde \Theta(k-1) \|}  +  \frac{2 D^T_{\bar \varepsilon} \Xi_{C}\tilde \Theta(k-1)}{\beta+\| S \tilde \Theta(k-1) -D_{\bar \varepsilon} \|} \big\} \nonumber\\  & \le - \frac{2 \xi_{C} }{(\eta +1)} (\frac{\lambda_{min}(S)}{\lambda_{max}(S)}) \|\tilde \Theta(k-1) \| +2 \xi_{C} \frac{{P} b_{\bar \varepsilon}}{\lambda_{min}(S) },\label{U2}
		\\
	  tr&\big\{2  D^T_{\beta}(k-1) \Gamma \Xi_{G} \Xi_{C} D(k-1)  \tilde\Theta(k-1) \big\} \nonumber\\ & \le  tr\big\{2   \frac{  \tilde \Theta^T(k-1) S \Gamma \Xi_{G} \Xi_{C} D(k-1)  \tilde\Theta(k-1) }{\beta} \nonumber\\ &  - 2 \frac{D_{\bar \varepsilon} \Gamma \Xi_{G} \Xi_{C} D(k-1)  \tilde\Theta(k-1)}{\beta+\| S \tilde \Theta(k-1) -D_{\bar \varepsilon} \|}\big\} \nonumber\\& \le   2 \gamma \xi_{G} \xi_{C} \frac{  \lambda_{max}(S) }{\beta} \|\tilde\Theta(k-1)\|^2 + 2 \gamma \xi_{G} \xi_{C}\frac{ {P} b_{\bar \varepsilon} }{\lambda_{min}( S)} ,\label{U3}
	  \\
      tr&\big\{ D^T_{\beta}(k-1) \Gamma \Xi^2_{C}  D_{\beta}(k-1)  \big\} \le  \gamma \xi^2_{C}\|\frac{   S \tilde \Theta(k-1) -D_{\bar \varepsilon}  }{\beta+\| S \tilde \Theta(k-1) -D_{\bar \varepsilon} \|}\|^2  \nonumber\\&\le  \gamma \xi^2_{C}\frac{\|   S \tilde \Theta(k-1) -D_{\bar \varepsilon} \|^2 }{\beta^2}  \le \gamma \xi^2_{C}(\frac{  \|S \tilde \Theta(k-1)\| }{\beta} +  \frac{ \|  D_{\bar \varepsilon} \| }{\beta})^2  \nonumber\\&\le  \frac{\gamma \xi^2_{C}}{\beta^2}({  \lambda_{max}^2(S)  } \|\tilde \Theta(k-1)\|^2 + 2{  \lambda_{max}(S) P b_{\bar \varepsilon}} \|\tilde \Theta(k-1)\| + { P^2 b_{\bar \varepsilon}^2}). \label{U4}
		\end{align}
		Using \eqref{U1}-\eqref{U4} and knowing $\|D_{\beta}(k-1)\| < 1$, it follows that
		\begin{align}
		\Delta V(\tilde \Theta(k)) \le \mathpzc{a} \|\tilde \Theta(k-1)\|^2 + \mathpzc{b} \|\tilde \Theta(k-1)\| + \mathpzc{c}  , \label{DelVab}
		\end{align}
		where $\mathpzc{a}$ and $\mathpzc{c}$ are given in \eqref{a} and \eqref{c}, respectively, and 
		$\mathpzc{b}= - \frac{2 \xi_{C} }{(\eta +1)} (\frac{\lambda_{min}(S)}{\lambda_{max}(S)})+ 2 b_{\bar \varepsilon}(\xi_{C}   \gamma \lambda_{max}(S)(\xi_{G} +  \xi_{C} P(1+ \frac{1}{\beta^2}))  +  \xi_{C} P(\gamma \xi_{G}+ 1)+\xi_{G}  +  \gamma \xi_{G}^2)$.
		One can see from \eqref{VT1} that
		\begin{align}
		V(\tilde \Theta(k)) \le \frac{1}{\gamma}\|\tilde \Theta(k)\|^2  \,\,\, \Rightarrow \,\,\,  \sqrt{\gamma}V^{\frac{1}{2}}(\tilde \Theta(k))\le   \|\tilde \Theta(k)\|. \label{VT1U}
		\end{align}
		Now, we have the following two cases:
		
		\begin{enumerate}
		
		\item For adaptive approximators with zero MFAEs, i.e., $\bar \varepsilon(k)=0$ and $b_{\bar \varepsilon}=0$, using \eqref{VT1U}, \eqref{DelVab} leads to 
		\begin{align}
		\Delta&  V(\tilde \Theta(k))\le -\mathpzc{a_{\gamma}} V(\tilde \Theta(k-1)) -\mathpzc{b_{\gamma}} V^{\frac{1}{2}}(\tilde \Theta(k-1)),\label{DelVagbg}
		\end{align} 
		where $\mathpzc{a_{\gamma}}$ and $\mathpzc{b_{\gamma}}$ are given in \eqref{agama}.
		 Invoking Lemma 1, for $0 < \mathpzc{a_{\gamma}}<1$ and $\mathpzc{b_{\gamma}}>0$ (already satisfied), $\tilde \Theta(k)$ converges to zero within finite-time steps. 
		Satisfying condition \eqref{gamacond} keeps $0 < \mathpzc{a_{\gamma}}$.
	Note that $\mathpzc{a_{\gamma}}<1$ (i.e., $\gamma \mathpzc{a}+1>0$) always hold based on the following inequalities,
	\begin{align}
		 &(1 - {\gamma}\xi_{C}\lambda_{max}(S) )^2   + 2 \gamma^2 \xi_{G} \xi_{C}\lambda_{max}(S) (1+\frac{1}{\beta}) \nonumber\\& + \xi_{G}^2 \gamma^2 + \gamma^2 \xi^2_{C} \lambda_{max}^2(S)  (\frac{2}{\beta}+\frac{1}{\beta^2}) > 0\Rightarrow \nonumber\\ &-2{\gamma}\xi_{C}\lambda_{min}(S) + \gamma^2 \xi_{C}^2 \lambda_{max}^2(S) + \gamma^2 \xi^2_{C} \lambda_{max}^2(S)  (\frac{2}{\beta}+\frac{1}{\beta^2})   \nonumber\\& + 2 \gamma^2 \xi_{G} \xi_{C}\lambda_{max}(S) (1+\frac{1}{\beta})
		  + \xi_{G}^2 \gamma^2 +1 > 0 .
		\end{align}
	Therefore, based on Lemma 1 by satisfying \eqref{gamacond}, $\tilde \Theta(k)$ converges to zero for $k \ge K_1^*(V(\tilde \Theta_0))$.
	Using Lemma 1, one obtains \eqref{k1} for the associated settling-time function.
	
	\item For adaptive approximators with non-zero MFAEs, i.e., $\bar \varepsilon(k) \ne 0$, it is known that $\mathpzc{c}>0$ and $\mathpzc{a}<0$ (satisfying \eqref{gamacond}). Thus, by bounding $\mathpzc{b}$ with $\mathpzc{b_u}$, given in \eqref{bu}, from \eqref{DelVab} one obtains
		\begin{align}
		\Delta V(\tilde \Theta(k)) \le \mathpzc{a} \|\tilde \Theta(k-1)\|^2 +\mathpzc{b_u} \|\tilde \Theta(k-1)\| + \mathpzc{c}. \label{DelVabu}
		\end{align}
		Since, $\|\tilde \Theta(k)\| \ge 0$ and $\mathpzc{a}<0$, 
		the only valid non-negative root of \eqref{DelVabu} is $b_{\tilde\Theta} = \frac{-\mathpzc{b_u} - \sqrt{(\mathpzc{b_u})^2 - 4(\mathpzc{a})\mathpzc{c}}}{2\mathpzc{a}}$.
		Thus, if $\|\tilde \Theta(k)\| > b_{\tilde\Theta}$, then $\Delta V(\tilde \Theta(k))<0$,
		whereas, after $\tilde \Theta(k)$ enters the set $
		S_{\tilde \Theta} = \big\{\tilde \Theta : \|\tilde \Theta\| \le b_{\tilde\Theta} \big\}$, it is possible to have $\Delta V(k) \ge 0$. However, for discrete-time samples thereafter, $\tilde \Theta(k)$ stay within the positive invariant set $S_{\tilde \Theta} $. Therefore, provided that $\|\tilde \Theta_0\| > b_{\tilde\Theta}$, for all $k$, $\|\tilde \Theta(k)\| \le b_{\tilde\Theta}$.
		Hence, 
		$\tilde \Theta(k)$ is finite-time attractive to $S_{\tilde \Theta}, $ being invariant. 
				To obtain the settling-time function $K_2^*(V(\tilde \Theta_0))$ that $\|\tilde \Theta(k)\| $ reaches the invariant set $S_{\tilde \Theta} $, 
				using \eqref{VT1U}, \eqref{DelVabu} is written as 
		 \begin{align}
		 \Delta V(\tilde \Theta(k)) \le -\mathpzc{a_{\gamma}} V(\tilde \Theta(k-1)) +\mathpzc{b_u} \|\tilde \Theta(k-1)\| + \mathpzc{c}. \label{DelVagbu}
		 \end{align}
		 
		Then, using \eqref{DelVagbu} for $k=1,...,K_2^*-1,K_2^*$, one has
		 \begin{align}
		&V(\tilde \Theta(1))-V(\tilde \Theta_0) \le -\mathpzc{a_{\gamma}} V(\tilde \Theta_0) +\mathpzc{b_u} \|\tilde \Theta_0\| + \mathpzc{c},\nonumber \\
		&V(\tilde \Theta(2))-V(\tilde \Theta(1)) \le -\mathpzc{a_{\gamma}} V(\tilde \Theta(1)) +\mathpzc{b_u} \|\tilde \Theta_0\| + \mathpzc{c},\nonumber \\
		&\vdots \nonumber \\
		&V(\tilde \Theta(K_2^*))-V(\tilde \Theta(K_2^*-1)) \le -\mathpzc{a_{\gamma}} V(\tilde \Theta(K_2^*-1))\nonumber \\&+\mathpzc{b_u} \|\tilde \Theta_0\| + \mathpzc{c},\nonumber
		\end{align}
		with $V(\tilde \Theta(k))<V(\tilde \Theta(k-1))$ which results in
		\begin{align}
		 V(\tilde \Theta(K_2^*))-V(\tilde \Theta_0) \le K_2^*(-\mathpzc{a_{\gamma}} V(\tilde \Theta(K_2^*)) +\mathpzc{b_u} \|\tilde \Theta_0\|  + \mathpzc{c}).  \label{DelVk2}
		 \end{align}
		 Using $ V(\tilde \Theta(K_2^*)) \le \gamma^{-1} (b_{\tilde\Theta})^2$, the above inequality leads to the settling-time function given in \eqref{k2}.
		 Therefore, for $k \ge K_2^*(V(\tilde \Theta_0))$, $\tilde \Theta(k)$ reaches the invariant set $S_{\tilde \Theta}$.
		This completes the proof.\frQED
\end{enumerate}	
	
\emph{Proof of Theorem 2.}	Consider $V(\tilde \Theta(k))$ and $\Delta V(\tilde \Theta(k))$, respectively, given in \eqref{VT1} and \eqref{DelVT1}. Using \eqref{Estrule2}, \eqref{DelVT1} is written as, 
			\begin{align}
			&\Delta V(\tilde \Theta(k))=tr\big\{-2 \bar \Xi_{G}  \lfloor e(k-1) \rceil ^ {\gamma_1}  \bar d^T(k-1) \tilde \Theta(k-1)    \nonumber \\&
			- 2\bar \Xi_{C} \sum_{h=1}^{P}  \lfloor e_h(k-1)\rceil ^ {\gamma_1}   \bar d^T(\tau_h) \tilde \Theta(k-1)  \nonumber \\&
			+\bar \Gamma \bar \Xi_{G}^2  \lfloor e(k-1) \rceil ^ {\gamma_1}   D(k-1) \lfloor e^T(k-1)\rceil ^ {\gamma_1}  \nonumber \\& 
			+\bar \Gamma \bar \Xi_{C} \bar \Xi_{G} \sum_{h=1}^{P}  \lfloor e_h(k-1) \rceil ^ {\gamma_1}  \bar d^T(\tau_h) \bar d(k-1) \lfloor e^T(k-1) \rceil ^ {\gamma_1}  \nonumber \\& 
			+\bar \Gamma \bar \Xi_{C} \bar \Xi_{G} \lfloor e(k-1)\rceil ^ {\gamma_1}  \bar d^T(k-1)(\sum_{h=1}^{P} \bar d(\tau_h) \lfloor e_h^T(k-1) \rceil ^ {\gamma_1} )
			\nonumber \\&
			+\bar \Gamma \bar \Xi_{C}^2 \sum_{h=1}^{P}\lfloor e_h(k-1)\rceil ^ {\gamma_1}  \bar d^T(\tau_h)\sum_{h=1}^{P}\bar d(\tau_h)\lfloor e_h^T(k-1)\rceil ^ {\gamma_1} \big\}.  \label{DelV6}  
		\end{align} 
	Consider in the component-wise sense that $|(\bar d^T(k) \tilde \Theta(k))_j| \ge |(\bar \varepsilon(k))_j|$, for $j=1,...,n$. 
	Therefore, $sign(\bar d^T(k) \tilde \Theta(k) - \bar \varepsilon^T(k))= sign(\bar d^T(k) \tilde \Theta(k))$. Then, for any $y, \bar y \in \Real$ and $ 0<\gamma_1<1 $, one has $|y + \bar y |^{\gamma_1} <|y|^{\gamma_1} +|\bar y|^{\gamma_1}$ [22].
		Thus, defining $y=(\bar d^T(k) \tilde \Theta(k))_j- (\bar \varepsilon(k))_j$ and $\bar y = (\bar \varepsilon(k))_j$, for all $j=1,...,n$, one obtains that $|(\bar d^T(k) \tilde \Theta(k))_j |^{\gamma_1} -|( \bar \varepsilon(k))_j|^{\gamma_1}  \le |(\bar d^T(k) \tilde \Theta(k))_j - (\bar \varepsilon(k))_j|^{\gamma_1}$,
		and then in the component-wise sense,  
				\begin{align}
			-|\bar d^T(k) \tilde \Theta(k) -\bar \varepsilon(k) |^{\gamma_1}  \le -|\bar d^T(k) \tilde \Theta(k) |^{\gamma_1} + | \bar \varepsilon(k)|^{\gamma_1}. \label{ineq20}
		\end{align}
One knows that 
\begin{align}
(\lfloor \bar d^T(k) \tilde \Theta(k) \rceil ^ {\gamma_1})^T  \bar d^T(k) \tilde \Theta^T(k)  &= \| \bar d^T(k)\tilde \Theta(k)\|^{\gamma_1+1}_{\gamma_1+1}, \label{F1} \\
 \| \lfloor \bar d^T(\tau_h)\tilde \Theta(k-1) \rceil ^{\gamma_1}\| &= \| \bar d^T(\tau_h)\tilde \Theta(k-1)  \|^{\gamma_1} _{2\gamma_1}.\label{F2} 
	\end{align}
	Note that $\||\bar \varepsilon(k)|^{\gamma_1}\| =  \|\bar \varepsilon(k)\|^{\gamma_1}_{2\gamma_1}$ and by Fact 2 one has,
		\begin{align}
			\|\bar \varepsilon(k)\|_{2 \gamma_1} &\le n^{\frac{1- \gamma_1}{2 \gamma_1}} \|\bar \varepsilon(k)\|, \label{epsi}
			\\	\|\bar d^T(k) \tilde \Theta(k)\| &\le \|\bar d^T(k) \tilde \Theta(k)\|_{\gamma_1+1}, \label{F3} 
		\end{align}
for all $0<2\gamma_1<2$. 
	Now, using \eqref{errork}, \eqref{errorkhA} and \eqref{ineq20}-\eqref{F3}, one obtains
		\begin{align}
			&\Delta V(\tilde \Theta(k))\le \nonumber \\& -2 \bar \xi_{G} ( \| \bar d^T(k-1) \tilde \Theta(k-1)  \| ^ {\gamma_1+1}_ {\gamma_1+1}  -\| \tilde \Theta(k-1)  \| \||\bar \varepsilon(k-1)|^{\gamma_1}\|) \nonumber \\&
			- 2\bar \xi_{C} \sum_{h=1}^{P}  (\| \bar d^T(\tau_h) \tilde \Theta(k-1)\| ^ {\gamma_1+1} _ {\gamma_1+1} -  \|\bar d^T(\tau_h) \tilde \Theta(k-1)  \| \||\bar \varepsilon(\tau_h)|^{\gamma_1}\| )\nonumber \\&
			+\bar\gamma \bar \xi_{G}^2 \| \bar d^T(k-1) \tilde \Theta(k-1) \| ^ {2\gamma_1}_{2\gamma_1} +\bar\gamma \bar \xi_{G}^2 \||\bar \varepsilon(k-1)|^{\gamma_1}\|^2 \nonumber \\&+  2\bar \gamma \bar \xi_{G}^2 \| \bar d^T(k-1) \tilde \Theta(k-1) \| ^ {\gamma_1}_{2\gamma_1} \||\bar \varepsilon(k-1)|^{\gamma_1}\|\nonumber \\& 
			+2\bar\gamma \bar \xi_{C} \bar \xi_{G} \sum_{h=1}^P  \| \bar d^T(\tau_h) \tilde \Theta(k-1) \|^ {\gamma_1}_ {2\gamma_1}  \| \bar d^T(k-1) \tilde \Theta(k-1) \|^ {\gamma_1}_ {2\gamma_1} \nonumber \\& +2\bar\gamma \bar \xi_{C} \bar \xi_{G} {P} \||\bar\varepsilon(\tau_h)|^{\gamma_1}\| \||\bar\varepsilon(k-1)|^{\gamma_1}\| \nonumber \\& +2\bar\gamma \bar \xi_{C} \bar \xi_{G} \sum_{h=1}^P \| \bar d^T(\tau_h) \tilde \Theta(k-1) \|^ {\gamma_1}_ {2\gamma_1} \||\bar\varepsilon(k-1)|^{\gamma_1}\|
			\nonumber \\& +2\bar \gamma \bar \xi_{C} \bar \xi_{G} {P} \| \bar d^T(k-1) \tilde \Theta(k-1) \|^ { \gamma_1}_ {2 \gamma_1} \||\bar\varepsilon(\tau_h)|^{\gamma_1}\|
			\nonumber \\&
			 + \bar \gamma \bar \xi_{C}^2 \sum_{h=1}^P\|\bar d^T(\tau_h) \tilde \Theta(k-1)  \|^{2\gamma_1}_{2\gamma_1} + \bar \gamma \bar \xi_{C}^2 {P^2}\||\bar \varepsilon (\tau_h) |^{ \gamma_1}\|.   \label{DelV8}
		\end{align}
Using \eqref{epsi}, $\|\bar \varepsilon(k)\| \le b_{\bar \varepsilon}$ and Fact 2, 
\eqref{DelV8} leads to
		\begin{align}
			&\Delta V(\tilde \Theta(k))\le  (-2 \bar \xi_{G} +\bar \gamma \bar \xi_{G}^2 n^{1-\gamma_1}) \| \bar d^T(k-1) \tilde \Theta(k-1)  \| ^ { \gamma_1+1}\nonumber \\& + (- 2\bar \xi_{C} +n^{1- \gamma_1}(2\bar \gamma \bar \xi_{C} \bar \xi_{G}  +\bar \gamma \bar \xi_{C}^2  )) \sum_{h=1}^{P}  \| \bar d^T(\tau_h) \tilde \Theta(k-1)\| ^ { \gamma_1+1} \nonumber \\& + 2 b_{\bar \varepsilon}^{\gamma_1}(\bar \xi_{C} n^{\frac{1- \gamma_1}{2}} +n^{1-\gamma_1}(\bar \gamma \bar\xi_{C} \bar \xi_{G}  +\bar \gamma \bar \xi_{C}^2 )) \sum_{h=1}^{P}  \|\bar d^T(\tau_h) \tilde \Theta(k-1)  \|\nonumber \\& + 2 b_{\bar \varepsilon}^{ \gamma_1} n^{1- \gamma_1}\bar \gamma ( \bar\xi_{G}^2  +\bar\xi_{C} \bar \xi_{G}  P ) \| \bar d^T(k-1) \tilde \Theta(k-1) \|  +  2 \bar \xi_{G} n^{\frac{1- \gamma_1}{2}} \times \nonumber \\ & b_{\bar \varepsilon}^{ \gamma_1}\| \tilde \Theta(k-1)  \|   +\bar \gamma n^{1-{ \gamma_1}} b_{\bar \varepsilon}^{2 \gamma_1}( \bar\xi_{G}^2 +2 \bar \xi_{C} \bar \xi_{G} {P}+  \bar \xi_{C}^2 {P^2} n^{\frac{\gamma_1-1}{2}} b_{\bar \varepsilon}^{ -\gamma_1}).  \label{DelV19}
		\end{align}
Using $S=\sum_{h=1}^{P} \bar d(\tau_h)\bar d^T(\tau_h)$, \eqref{DelV19} is rewritten as
     	\begin{align}
			\Delta V&(\tilde \Theta(k))\le -a' \| \tilde \Theta(k-1)\|^ {\gamma_1+1} + b' \| \tilde \Theta(k-1)\| +  c' ,  \label{DelV20}
		\end{align}
		where $	a'$, $b'$, $c'$ are given in \eqref{aprime}-\eqref{cprime}. 
		One can see that using \eqref{VT1U}, \eqref{DelV20} can be written as
\begin{align}
			\Delta V&(\tilde \Theta(k))\le -\alpha' V^ {\frac{\gamma_1+1}{2}}(\tilde \Theta(k-1)) + b' \| \tilde \Theta(k-1)\| +  c'   ,  \label{DelV15}
		\end{align} where $\alpha'=a' \bar\gamma^{\frac{\gamma_1+1}{2}}$. Now, we have the following two cases:
		\begin{enumerate}
		\item 	For adaptive approximators with zero MFAEs ($\bar \varepsilon(k)=0$) and $b_{\bar \varepsilon}=0$, \eqref{DelV15} reduces to $\Delta V(\tilde \Theta(k))\le -\alpha' V^ {\frac{\gamma_1+1}{2}}(\tilde \Theta(k-1))$, with $0<\frac{\gamma_1+1}{2}<1$
and this leads to 
		\begin{align}
			\Delta V&(\tilde \Theta(k))  \le -\alpha' \min  \{ \frac{V(\tilde \Theta(k-1))}{\alpha'}, V^ {\frac{\gamma_1+1}{2}}(\tilde \Theta(k-1))\} .   \label{DelV16}
		\end{align}
In order to have $0<\alpha'$, the condition \eqref{gamacond1} should be satisfied.
Invoking Lemma 2 and \eqref{DelV16}, $\tilde \Theta(k)$ converges to zero and a settling-time function is obtained as given in \eqref{K111}.   
\item For adaptive Approximators with non-zero MFAEs ($\bar \varepsilon(k) \ne 0$), by considering $\gamma_1$ as $\gamma_1=\frac{m_1}{m_2}$, one finds the roots of \eqref{DelV20} where $m_1$ and $m_2$ are positive integers with $m_1<m_2$. Using Descartes' rule of signs, \eqref{DelV20} contains a positive root $\bar b_{\tilde \Theta}$ where for $\|	\tilde \Theta(k) \| > \bar b_{\tilde \Theta}$, one has $\Delta V(\tilde \Theta(k)) <0$.  Thus, similar to the proof of part 2 in Theorem 1, one shows that $\tilde \Theta(k)$ is finite-time attractive to the bound $\bar b_{\tilde \Theta}$ (i.e., $\|{\tilde \Theta}(k)\| \le \bar b_{\tilde \Theta}$) and one obtains the corresponding settling-time function as given in \eqref{k22}. This completes the proof.\frQED
\end{enumerate}

\begin{thebibliography}{00}
		
		
		\bibitem{b1} J. A. Farrell and M. Polycarpou, \emph{Adaptive Approximation based Control: Unifying Neural, Fuzzy and Traditional Adaptive Approximation Approaches.} John Wiley and Sons, 2006.
		
		\bibitem{b2} G. Chowdhary, “Concurrent learning for convergence in adaptive control without persistency of excitation,” Ph.D. dissertation, Georgia Inst. Technol., Atlanta, GA, USA, Dec. 2010.
		
		\bibitem{b3} H. Modares, F. L. Lewis, M.-B. Naghibi-Sistani, “Adaptive optimal control of unknown constrained-input systems using policy iteration and neural networks,” \emph{IEEE Trans. Neural Netw. Learn. Syst.}, vol. 24, no. 10, pp. 1513–1525, 2013.
		
        \bibitem{b4} G. Chowdhary, T. Yucelen, M. Muhlegg, and E. Johnson, “Concurrent learning adaptive control of linear systems with exponentially convergent bounds,” \emph{Int. J. Adapt. Control Signal Process.}, vol. 27, no. 4, pp. 280-301, 2013.
		
		\bibitem{b5} R. Kamalapurkar, B. Reish, G. Chowdhary, and W. E. Dixon, “Concurrent Learning for Parameter Estimation Using Dynamic State-Derivative Estimators,” \emph{IEEE Trans on Automatic Control,} vol. 62, no. 7, pp. 3594-3601, 2017.
		
		\bibitem{b6} F. Tatari, K. G. Vamvoudakis, M. Mazouchi, “Optimal distributed learning for disturbance rejection in networked non-linear games under unknown dynamics,” \emph{IET Control Theo. App.}, vol. 13, no. 17, pp. 2838-2848, 2019.
		
		\bibitem{b7} S. Li, H. Du,and X. Yu, “Discrete-Time Terminal Sliding Mode Control Systems Based on Euler’s Discretization,” \emph{IEEE Trans. on Automatic Control,} vol. 59, no. 2, pp. 546-552, 2014.
		
		\bibitem{b8} G. Sun , Z. Ma, and J. Yu, “Discrete-Time Fractional Order Terminal Sliding Mode Tracking Control for Linear Motor,” \emph{IEEE Trans. on Industrial Electronics,} vol. 65, no. 4, pp. 3386-3394, 2018.
		
		
		\bibitem{b9} L. Liu, Y. J. Liu , and S. Tong, “Neural Networks-Based Adaptive Finite-Time Fault-Tolerant Control for a Class of Strict-Feedback Switched Nonlinear Systems,” \emph{IEEE Trans. on Cybernetics}, vol. 49, no. 7, pp. 2536-2545, 2019.  
		
		\bibitem{b10} Y. Liu , X. Liu , Y. Jing, X. Chen , and J. Qiu, “Direct Adaptive Preassigned Finite-Time Control With Time-Delay and Quantized Input Using Neural Network,” \emph{IEEE Trans. on Neural Net. and Lear. Sys.}, vol. 31, no. 4, pp. 1222-1231, 2020.
		
		\bibitem{b11} R. Hamrah, A. K. Sanyal, S. P. Viswanathan, “Discrete Finite-time Stable Position Tracking Control of Unmanned Vehicles,” \emph{in Proc. 58th Conf. on Decis. Control (CDC)}, Dec. 2019.
		
		
		
		\bibitem{b12} V. Adetola and M. Guay, “Finite-time parameter estimation in adaptive control of nonlinear systems,” \emph{IEEE Trans. Autom. Control,} vol. 53, no. 3, pp. 807-811, 2008.
		
		
				
		\bibitem{b13} J. Wang, D. Efimov, A. A. Bobtsov, “On Robust Parameter Estimation in Finite-Time Without Persistence of Excitation,” \emph{IEEE Trans. on Automatic control}, vol. 65, no. 4, pp. 1731-1738, 2020.
		
		\bibitem{b14} J. Wang, D. Efimov, and A. A. Bobtsov, “Finite-time parameter estimation without persistence of excitation,” \emph{in Proc. 18th Eur. Control Conf. (ECC),} June 2019.
				
		\bibitem{b15} A. Vahidi-Moghaddam, M. Mazouchi, H. Modares, “Memory-Augmented System Identification with Finite-Time Convergence,” \emph{IEEE Control Sys. Letters}, vol. 5, no. 2, pp. 571-576, 2021.
				
		\bibitem{b16} D. Lehrer , V. Adetola, M. Guay, “Parameter identification methods for non-linear discrete-time systems,” \emph{in Proc. 2010 American Control Conf. (ACC)}, Jun.-Jul. 2010, pp. 2170-2175.
		
		\bibitem{b17} R. Ortega, S. Aranovskiy, A. A. Pyrkin, A. Astolfi and A. A. Bobtsov, "New Results on Parameter Estimation via Dynamic Regressor Extension and Mixing: Continuous and Discrete-Time Cases," IEEE Transactions on Automatic Control, vol. 66, no. 5, pp. 2265-2272, May 2021.
		
		\bibitem{b18} O. Djaneye-Boundjou, and R. Ordonez, “Gradient-Based Discrete-Time Concurrent Learning for Standalone Function Approximation,” \emph{IEEE Trans on Automatic Control}, vol. 65, no. 2, pp. 749-756, 2020. 
		
		\bibitem{b19} F. Tatari, C. Panayiotou, and M. Polycarpou, “Finite-time identification of unknown discrete-time nonlinear systems using concurrent learning,” in Proc. 60th IEEE Conf. Decis. Control (CDC), Dec. 2021.
		
		\bibitem{b20} G. Tao, \emph{Adaptive Control Design and Analysis (Adaptive and Learning Systems for Signal Processing, Communications and Control Series)}. New York, NY, USA: Wiley, 2003.
		
		\bibitem{b21} S. P. Bhat, D. S. Bernstein, “Finite time stability of continuous autonomous systems,” \emph{SIAM Journal on Control and Opti.}, vol. 38, no. 3, pp. 751-766, 2000.

       \bibitem{b22} D. Mitrinovic, \emph{Analytic Inequalities}, vol. 165 of Grundlehren der mathematischen Wissenschaften. Springer-Verlang Berlin Heidelberg, 1970.
		
		
		\bibitem{b23} W. M. Haddad, and J. Lee, "Finite-time stability of discrete autonomous systems," Automatica, vol. 122, 2020, 109282.

		
        \bibitem{b24} H. Mania, M. I. Jordan, and B. Recht, "Active learning for nonlinear system identification with guarantees," https://arxiv.org/abs/2006.10277v1.
        		
		\bibitem{b25} O. Djaneye-Boundjou and R. Ordonez, “Parameter identification in structured discrete-time uncertainty without persistency of excitation,” \emph{in Proc. Eur. Control Conf. (ECC)}, July 2015, pp. 3149-3154.
		
		\bibitem{b26} G. Chowdhary and E. Johnson, “A singular value maximizing data recording algorithm for concurrent learning,” \emph{in Proc. Amer. Control Conf.}, Jun. 2011, pp. 3547-3552.
        


		
				
		
		
		
		
		
		 		
		
		
		
		
  	\end{thebibliography}
\end{document}